\title{Fine-Grained Liquid Democracy for Cumulative Ballots}
  \date{}
\author[1]{Matthias K{\"o}ppe}
\author[2]{Martin Kouteck{\'y}}
\author[3]{Krzysztof Sornat}
\author[4]{Nimrod Talmon}
\affil[1]{\small University of California, Davis, USA\hspace{10pt} \texttt{mkoeppe@math.ucdavis.edu}}
\affil[2]{Charles University, Prague, Czech Republic\hspace{10pt} \texttt{koutecky@iuuk.mff.cuni.cz}}
\affil[3]{IDSIA USI-SUPSI, Lugano, Switzerland\hspace{10pt} \texttt{krzysztof.sornat@idsia.ch}}
\affil[4]{Ben-Gurion University of the Negev, Beer Sheva, Israel\hspace{10pt} \texttt{talmonn@bgu.ac.il}}
\newtheorem{theorem}{Theorem}
\newtheorem{proposition}[theorem]{Proposition}
\newtheorem{lemma}[theorem]{Lemma}
\newtheorem{observation}[theorem]{Observation}
\newtheorem{remark}{Remark}
\newtheorem{example}{Example}
\newcommand{\R}{\mathbb{R}}
\newcommand{\Z}{\mathbb{Z}}
\newcommand{\ve}[1]{\bm{#1}}
\newcommand\ved{{\ve d}}
\newcommand\veu{{\ve u}}
\newcommand\vev{{\ve v}}
\newcommand\vew{{\ve w}}
\newcommand\vex{{\ve x}}
\newcommand\vey{{\ve y}}
\newcommand\vezero{{\ve 0}}
\def\R{\mathbb{R}}
\def\Z{\mathbb{Z}}
\def\calS{\mathcal{S}}
\def\moverlay{\mathpalette\mov@rlay}
\def\mov@rlay#1#2{\leavevmode\vtop{%
   \baselineskip\z@skip \lineskiplimit-\maxdimen
   \ialign{\hfil$\m@th#1##$\hfil\cr#2\crcr}}}
\newcommand{\charfusion}[3][\mathord]{
    #1{\ifx#1\mathop\vphantom{#2}\fi
        \mathpalette\mov@rlay{#2\cr#3}
      }
    \ifx#1\mathop\expandafter\displaylimits\fi}
\newcommand{\bigcupdot}{\charfusion[\mathop]{\bigcup}{\cdot}}
\DeclareMathOperator*{\polylog}{polylog}
\begin{document}

\maketitle

\begin{abstract}
We investigate efficient ways for the incorporation of liquid democracy into election settings in which voters submit cumulative ballots, i.e., when each voter is assigned a virtual coin that she can then distribute as she wishes among the available election options.
In particular, we are interested in fine-grained liquid democracy, meaning that voters are able to designate a partial coin to a set of election options and delegate the decision on how to further split this partial coin among those election options to another voter of her choice.

The fact that we wish such delegations to be transitive---combined with our aim at fully respecting such delegations---means that inconsistencies and cycles can occur, thus we set to find computationally-efficient ways of resolving voter delegations.
To this aim we develop a theory based fixed-point theorems and mathematical programming techniques and we show that for various variants of definitions regarding how to resolve such transitive delegations, there is always a feasible resolution; and we identify under which conditions such solutions are efficiently computable.
\end{abstract}

\section{Introduction}

We discuss liquid democracy (LD), fine-grained liquid democracy (FGLD), and fine-grained liquid democracy for cumulative ballots (CBs). Then we provide a technical and structural overview of the paper.

\paragraph{Liquid democracy (LD)}
In elections that use \emph{proxy voting}~\cite{miller201924} each voter can choose whether to vote directly by casting her vote or to delegate her vote to a delegate of her choice (who votes on her behalf). In \emph{liquid democracy}~\cite{blum2016liquid} (LD, in short), such delegations are transitive.
LD has attracted attention and popularity partially due to the LiquidFeedback tool for deliberation and voting~\cite{liquidfeedback} that allows for such transitive vote delegations and is in use by the German Pirate Party as well as by other political parties around the globe.

In a way, LD is a middle-ground between direct democracy and representative democracy, as the voters who actively vote---sometimes referred to as \emph{gurus}~\cite{zhang2021power}---act as ad-hoc representatives.
To the voters, LD offers more expressiveness and flexibility,
and several generalizations of LD have been proposed that push these aspects even further:
  e.g., G{\"o}lz et al.~\cite{golz2021fluid} suggested to let voters delegate their vote to several delegates;
  Colley et al.~\cite{colley2021smart} suggested to use general logical directives when specifying delegates.

\paragraph{Fine-grained liquid democracy (FGLD)}
Here we follow a specific enhancement of LD, namely \emph{fine-grained liquid democracy (FGLD)}. Originally suggested by Brill and Talmon~\cite{brill2018pairwise}, the general idea is to allow voters to delegate \emph{parts} of their ballots to different delegates, instead of delegating their ballots as a whole.
Specifically, Brill and Talmon~\cite{brill2018pairwise} studied FGLD for ordinal elections, in which each ballot is a linear order over a set of candidates (e.g., an ordinal election over the set of candidates $C = \{a, b, c\}$ may contain a voter voting $a > c > b$, meaning that the voter ranks $a$ as her best option, $c$ as her second option, and $b$ as her least-preferred option). In this context, instead of allowing each voter to either specify a linear order (i.e., vote directly) or delegate their vote to a delegate of their choice, Brill and Talmon~\cite{brill2018pairwise} suggested the following: 
  for each pair of candidates $a$, $b$, each voter is able to either specify whether she prefers $a$ to $b$ or vice versa, or delegate that decision to another voter of her choice.
While indeed offering greater voter flexibility, this ordinal FGLD scheme may result in non-transitive ballots; e.g., consider a voter deciding $a > b$ but delegating the decision on $\{b, c\}$ to a voter who eventually decides $b > c$ and also delegating the decision on $\{a, c\}$ to a voter who eventually decides $c > a$. The resulting ballot, i.e., $a > b, b > c, c > a$, would be intransitive.
Brill and Talmon~\cite{brill2018pairwise} suggested several algorithmic techniques to deal with such possible violations of ballot transitivity.
Following Brill and Talmon~\cite{brill2018pairwise}, Jain et al.~\cite{jain2021preserving} studied FGLD for {\it Knapsack ballots}, in which the ballot of each voter is a subset $Q$ of candidates from a given set of candidates, each candidate has some cost, and there is a restriction that the total cost of the candidates in $Q$ shall respect some known upper bound (Knapsack ballots are useful for participatory budgeting~\cite{goel2019knapsack}).
Again, the main challenge of allowing such fine-grained delegations is that ballots may end up being inconsistent---in the case of FGLD for Knapsack ballots, after following these delegations ballots may end up violating the total cost upper bound.

\begin{remark}\label{remark:one}
At a high level, the basic issue that has to be resolved when considering LD and its generali\-zations---e.g., FGLD---is that, on one hand, we wish to resolve voter delegations transitively and in a way that is as close as possible to voter intentions, but, on the other hand, we have to satisfy certain structural constraints on the admissible ballots that we arrive to:
  e.g., in ordinal FGLD the set of admissible ballots are those that are transitive; and in Knapsack FGLD the set of admissible ballots are those that respect the global budget limit.
These structural constraints mean that it is not always possible to follow voter delegations such that all voter ballots correspond exactly to voter intentions (e.g., in ordinal FGLD one way to satisfy the constraints is by not considering some delegations at all, thus not fully respecting all voter delegations).
Note that this is also the focus of work on smart voting~\cite{colley2021smart}, which is a further generalizations of LD to arbitrary voter directives.

Below we describe what we mean by FGLD for cumulative ballots and, in particular, how we understand voter intentions as described by their partial ballots and their delegations.
\end{remark}

\paragraph{FGLD for cumulative ballots (CBs)}
In this paper we study FGLD for cumulative ballots (CBs).
A \emph{cumulative ballot} with respect to a set $C$ of $m$ candidates is a division of a unit support among the $m$ candidates (visually, a cumulative ballot corresponds to a division of a virtual divisible coin among the candidates):
  e.g., a cumulative ballot with respect to a set $C = \{a, b, c\}$ may be represented by $[0.4, 0.3, 0.3]$,
  meaning that the voter gives support $0.4$ to $a$, support $0.3$ to $b$, and support $0.3$ to $c$.
CBs are used for different social choice settings~\cite{cole1949legal}, such as for multiwinner elections~\cite{meir2008complexity} and for participatory budgeting~\cite{skowron2020participatory}.
From our viewpoint, CBs are especially fitting to FGLD as they are inherently quite expressive.
Furthermore, note that CBs generalize approval ballots and, to a lesser extent, ordinal ballots. 
To get a glimpse into what FGLD for CBs means, consider the following example.

\begin{example}\label{example:one}
Consider a participatory budgeting~\cite{cabannes2004participatory} instance with the following set of projects:
  $p_1$ is a proposal to build a school for $\$1M$;
  $p_2$ is a proposal to renovate a public university for $\$2M$;
  $p_3$ is a proposal to open a new hospital for $\$3M$;
  and $p_4$ is a proposal to refurbish a birthing center for $\$4M$.
With a standard cumulative ballot, a voter may specify, say, a support of $0.1$ to $p_1$, $0.2$ to $p_2$, $0.3$ to $p_3$, and $0.4$ to $p_4$. With a fine-grained liquid democracy cumulative ballot, however, a voter may specify, say, a support of $0.3$ to the set $\{p_1, p_2\}$ (of education-related projects) and the remainder support of $0.7$ to the set $\{p_3, p_4\}$ (of health-related projects), and delegate the decision regarding the specific division of the $0.3$ support between $p_1$ and $p_2$ to some voter of her choice, as well as the decision regarding the specific division of the $0.7$ support between $p_3$ and $p_4$ to (possibly a different) voter of her choice.
\end{example}

Generally speaking, we wish to allow voters to delegate parts of their cumulative ballots to other voters of their choice, so that they could concentrate on the ``high-level'' decisions (such as the division of the unit support between the education-related projects and the health-related projects in Example~\ref{example:one}) but delegate the ``low-level'' decisions further.
But how shall such fine-grained delegations should be understood?
Consider the following continuation of Example~\ref{example:one}.

\begin{example}\label{example:two}
Say that the voter of Example~\ref{example:one} delegate the decision on how to divide the $0.3$ between $p_1$ and $p_2$ to some voter who assigns $0.2$ support to $p_1$ and $0.4$ support to $p_2$; how shall the support of $0.3$ be divided then? Intuitively, we wish to split the $0.3$ support \emph{proportionally} to how the delegate splits her support between $p_1$ and $p_2$. In this example, this means to assign support $0.1$ to $p_1$ and $0.2$ to $p_2$, as $[0.1, 0.2]$ exactly preserves the support ratio of $[0.2, 0.4]$.
\end{example}

Indeed, throughout the paper we concentrate on resolving fine-grained cumulative delegations in a way that would be as close as possible to such proportional (i.e., ratio-preserving) way as described in Example~\ref{example:two}. 
Defining such proportionality formally turns out to not be a straightforward task, mainly because of the possibility of cyclic delegations and the possibility of delegates assigning $0$ support to some candidates.
Thus, in Section~\ref{section:formal}, after providing the needed notation and formally defining our setting, we describe four natural definitions of such proportionality.
As we observe certain drawbacks of the first three definitions, we continue a further algorithmic and computational exploration only with the fourth definition, which is the most promising.

\paragraph{Technical overview}
Before we dive into our formal treatment, we provide a high-level description of our technical approach, via the following example.

Consider an instance $I$ of FGLD for CBs and imagine a solution $S$ that resolves all voter delegations. 
Now, consider some voter $v$. If, in $S$, the delegations of $v$ are all resolved in an exact proportional way (such as in Example~\ref{example:two}), then, in a way, the intentions of $v$ are fully respected; put differently, $v$ should be ``happy''. If, however, this is not the case, then, if had the ability to do so, $v$ would wish to change her ballot to be more proportional.

The above discussion suggests a game-theoretic point of view on our setting. Taking this perspective, we observe that the existence of a ``perfect'' solution (i.e., that satisfies proportionality exactly for all voters) is equivalent to the existence of a Nash-equilibrium in this game. 
Slightly more concretely, we consider the ``best-response'' function of each voter:
  this function, given a solution (such as $S$ above), assigns for a voter $v$ their best-response ballot to $S$ (i.e., if $v$'s ballot is not optimal for $v$ with respect to some $S$, then the best-response function would return some other ballot that is optimal for $v$).
We then concentrate on these best-response functions and recall that a Nash-equilibrium corresponds to a solution that is a fixed point for those best-response functions (i.e. all voters do not wish to change their ballots, which means that the best-response functions, at this particular point, return their input for all voters).

Following our formulation of such games and the relation to fixed points we then utilize the theory of fixed points and articulate certain sufficient conditions that, whenever they are satisfied by our best-response functions, guarantee the existence of solutions.
Furthermore, we show that a deep result from logic~\cite{Renegar92} implies an algorithm to find solutions when the instance size is fixed.
We also attempt to prove that the best-response function for a notion of proportionality which we focus on has some favorable structural properties; in our attempt, we instead computationally find counterexamples to these properties (Lemmata~\ref{lem:contraction}, \ref{lem:pseudomono}, and~\ref{lem:nonunique}).
While we focus on a specific definition of proportionality that is relevant for the setting of FGLD for CBs, our techniques are rather general, thus in Section~\ref{section:generalizations} we discuss wide generalizations of our setting in which our existence and tractability meta-theorems hold as well.

\paragraph{Paper structure}
In Section~\ref{section:formal} we provide some useful notation and formally describe the general setting of FGLD for CBs. Then, in Section~\ref{section:proportionality} we describe our four notions of proportionality for that setting and argue that our fourth definition elegantly overcomes the drawbacks of the other definitions.
In Section~\ref{section:meta} we discuss the existence, structure, and computability of solutions for our four notions of proportionality.

\section{Formal Model}\label{section:formal}

We begin with definitions of some useful notation.
For a natural number $n$, we denote the set $\{1,2,\dots,n\}$ by $[n]$.
Typically we use bold font to denote a vector (or a matrix), e.g., $\vey$ or $\vex$.
By convention we use subscripts in order to point out a specific value in a vector (or a matrix), and write this vector in regular font, e.g., $y_v = y(v)$ and $x_{v,c} = x(v,c)$.
For a vector of real numbers $\vey \in \R^n$ by $\|\vey\|_1$ we denote $\ell_1$-norm of $\vey$, i.e., $\|\vey\|_1 = \sum_{i \in [n]} y(i)$;
and by $\|\vey\|_\infty$ we denote $\ell_\infty$-norm of $\vey$, i.e., $\|\vey\|_\infty = \max_{i \in [n]} y(i)$
Additionally, for $\vex \in \R^n$ and $A \subseteq [n]$ we use the shorthand $\vex_A$ for the subvector with indices $A$, i.e., $|\vex_A| = |A|$ and $\forall_{i \in A} \: x_A(i) = x(i)$.
By $\vezero^n \in \Z^n$ we denote a vector of zeros of length $n$.
When writing $\vey = \vezero$ we use the notation $\vezero$ for a vector of zeros of appropriate length, i.e., the length of $\vey$.

{\it An election} with cumulative ballots consists of a set of $m$ candidates $C = \{c_1, \ldots, c_m\}$ and a set of $n$ voters $V = \{v_1, \ldots, v_n\}$ such that voter $v_i$ corresponds to a cumulative ballot that is represented as a vector $\vev_i = [v_i(c_1), \ldots, v_i(c_m)]$, with $v_i(c_j) \geq 0$ and $\sum_{c \in C} v_i(c) = 1$, such that $v_i(c)$ is the fractional support $v_i$ gives to $c$.\footnote{
  The scenario where a voter might not want to split all of their support among candidates, i.e., where we allow $\sum_{c \in C} v_i(C) < 1$, can be modeled by introducing a ``charity'' candidate to which the remainder of each voter's support is given.}

Our model of FGLD for an election with cumulative ballots is that each voter $v$ partitions $C$ into a family of non-empty and disjoint subsets (also called {\it bundles}) $S_{v,1}, \dots, S_{v,|\calS_v|} \subseteq C$, where $\calS_v = \{ S_{v,1}, \dots, S_{v,|\calS_v|} \}$ so $\bigcupdot_{S \in \calS_v} S = C$; and, for each $S \in \calS_v$ in their partition, sets a delegate $\delta(v,S) \in V$.
We call $b_{v,S} \geq 0$ the \emph{budget} for a bundle $S$, and we require that $\sum_{S \in \calS_v} b_{v,S} = 1$.
Moreover, we require that, if $\delta(v,S) = v$, then $|S| = 1$, which expresses the {\it self-delegation} scenario where voter $v$ makes a direct choice about the candidate from $S$.
Further, without loss of generality, we can require that, if $b_{v,S} = 0$ then $\delta(v,S) = v$ (so also $|S|=1$).

Given such a description of the delegations, it is natural to seek a solution\footnote{
  Hereinafter we use $\R^{nm}$ as a representation for $\R^{V \times C}$; this will be useful for some algebraic operations.}
$\vex \in \R^{V \times C} = \R^{nm}$ that satisfies the following conditions:
\begin{enumerate}

\item The vote $\vex_v$ is a cumulative ballot of weight $1$, i.e., $\|\vex_v\|_1 = 1$;
and

\item the vote $\vex_v$ respects the budgets, i.e., $\|\vex_{v,S}\|_1 = b_{v,S}$.

\end{enumerate}
A vector $\vex$ that satisfies these conditions is referred to as a \emph{solution}.
Next, the fact that voter $v$ delegates a bundle $S \subseteq C$ to a delegate $\delta(v,S)$ means that she wants her solution to relate to the solution of $\delta(v,S)$ in some way.

\section{Notions of Proportionality}\label{section:proportionality}

As discussed above, there are many ways in which this relation can be understood. Next we consider four notions of proportionality that showcase different behaviors that can occur.
To explain these notions, we need some further definitions.

A best-response function $f: \R^{nm} \to \R^{nm}$ describes, for each voter, what is their desired solution, given some solution $\vex$.
That is, given a solution $\vex$, the voter $v$ would be satisfied if their vote was $f(\vex)_v = (f(\vex))_v$.
Using this function, we can define the \emph{regret} of the voter $v$ as $\|f(\vex)_v - \vex_v \|_1$, which quantifies the difference between her current solution and her desired solution.
It is possible that, with respect to a solution $\vex$, a voter $v$ would be satisfied with not just one but a number of solutions; in that case, $f(\vex)_v$ would be a set, and $f$ would be a set-valued function (also called a \emph{correspondence}).
We are ready to describe a few specific notions of proportionality using this pattern.

\subsection*{Exact Proportionality (EP)}
A voter $v$ is satisfied with respect to a solution $\vex$ and a bundle $S$ if either $\|\vex_{\delta(v,S), S}\|_1 > 0$ (that is, their delegate gives positive support to $S$) and the ratios of $\vex_{v,S}$ exactly match the ratios of $\vex_{\delta(v,S), S}$, or in the case when $\|\vex_{\delta(v,S), S}\|_1 = 0$ the voter is always satisfied.
Thus, $f$ is a correspondence with $f(\vex)_{v,S} = (\vex_{\delta(v,S),S} / \|\vex_{\delta(v,S),S}\|_1) \cdot b(v,S)$ if $\vex_{\delta(v,S),S} \neq \vezero$, and $f(\vex)_{v,S}$ is
any non-negative vector of length $|S|$ and weight $b(v,S)$ otherwise.

The main problematic aspect of (EP) is the following:
  {\it Zero-support} of the delegate (i.e. $\vex_{\delta(v,S),S} = \vezero$) is an issue for the delegation and, in (EP), it is resolved in an arbitrary way allowing any split of the budget $b(v,S)$ into candidates in $S$.
We can see this in the following example.

\begin{example}\label{example:ep}
Let $V = \{v,u\}$ and $C = \{c_1, c_2, c_3\}$.
Voter $v$ delegates $S = \{c_1, c_2\}$ with budget $1$ to $u$ (hence $x_{v,\{c_3\}} = 0$).
If voter $u$ defines its cumulative ballot as $\veu = [0.001, 0, 0.999]$, then the only solution under (EP) for $v$ is $[1,0,0]$.
On the other hand, if $u$ will change its ballot slightly by just moving $0.001$ support from $c_1$ to $c_3$, then $\veu = [0, 0, 1]$ and $v$ can split its budget arbitrarily among $c_1$ and $c_2$.
Therefore, any $\vex_v = [a, 1-a, 0]$ for $a \in [0,1]$ is solution for $v$, in particular, $[0,1,0]$ is feasible for $\vex_v$; this is completely different than the only solution before $u$ slightly changed its ballot.
\end{example}

\begin{remark}
Example~\ref{example:ep} indeed also highlights that the solution is not robust to small changes in the input; robustness is an important property in other social choice context as well~\cite{robustness}.
\end{remark}

\subsection*{Exact Proportionality with Thresholds (EP-T)}
The behavior of (EP) in the case of zero-support may be seen as far too arbitrary (as in Example~\ref{example:ep}).
Moreover, it might seem unnatural that, in (EP), the voter $v$ only stops demanding an exactly proportional solution when her delegate's support for $S$ drops down to exactly $0$.
Indeed, it may be better to consider that she loses her confidence in her delegate below {\it the confidence threshold} $\epsilon_{v,S} > 0$ and then she uses her default vote.

To define such notion of proportionality we require more information from each voter $v$, i.e., for every bundle $S \in \calS_v$ we require a {\it weight} $w_{v,S} > 0$ (expressing the voter's confidence in the delegate), and a {\it default} vector $\ved_{v,S} \in \R_{\geq 0}^{|S|}$ such that $\|\ved_{v,S}\|_1 = b_{v,S}$ (which can be used when the delegate supports $S$ too weakly).
One natural example of a default vector is an {\it even-split} which is defined as $d_{v,S}(c) = b_{v,S} / |S|$ for every $c \in S$.

In (EP-T) we define a threshold $\epsilon_{v,S} = 1/w_{v,S}$ such that, if $\|\vex_{\delta(v,S), S}\|_1 \geq \epsilon_{v,S}$, then we demand exact proportionality as in (EP), but if $\|\vex_{\delta(v,S), S}\|_1 < \epsilon_{v,S}$, then we demand that $\vex_{v,S} = \ved_{v,S}$, that is, $v$'s action is her default vector.

Unfortunately, it may happen that there is no solution under (EP-T) as the following example shows.
The underlying reason for this non-existence is the discontinuity of the best-response function at $\epsilon_{v,S}$.
\begin{example}\label{example:ep-t}
  We define 2 voters $\{v,u\}$ and 4 candidates $\{c_1, c_2 , c_3, c_4\}$.
  Table~\ref{table:ep-t} shows the delegations data.
\end{example}

\begin{table}[ht]
\centering
\begin{tabular}[t]{ |c|c|c|c|c| } 
  \hline
  $S$                 & $\ved_{v,S}$ & $b_{v,S}$ & $\delta(v,S)$ & $\epsilon_{v,S}$ \\ \hline
  $S_1 = \{c_1,c_2\}$ & $[0.5, 0]$   & $0.5$     & $u$           & $0.8$ \\ \hline 
  $S_2 = \{c_3,c_4\}$ & $[0.5, 0]$   & $0.5$     & $u$           & $0.8$ \\ \hline
\end{tabular}\hfill
\begin{tabular}[t]{ |c|c|c|c|c| } 
  \hline
  $S$                 & $\ved_{u,S}$ & $b_{u,S}$ & $\delta(u,S)$ & $\epsilon_{u,S}$ \\ \hline
  $S_3 = \{c_1,c_4\}$ & $[0, 0.5]$   & $0.5$     & $v$           & $0.7$ \\ \hline 
  $S_4 = \{c_2,c_3\}$ & $[0, 0.5]$   & $0.5$     & $v$           & $0.4$ \\ \hline
\end{tabular}
\caption{Delegations defined by voters $v$ (right table) and $u$ (left table) in Example~\ref{example:ep-t}.}
\label{table:ep-t}
\end{table} 

\begin{proposition}\label{prop:ep-t-no-solution}
  Under (EP-T), there is no solution to the instance of Example~\ref{example:ep-t}.
\end{proposition}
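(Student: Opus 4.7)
The plan is an exhaustive case analysis on voter $v$'s best-response behaviour, which is determined by which of the two thresholds $\epsilon_{v,S_1}=\epsilon_{v,S_2}=0.8$ is exceeded by voter $u$'s bundle-masses. Since $S_1,S_2$ partition $C$ and $\|\veu\|_1=1$, we have $\|\veu_{S_1}\|_1+\|\veu_{S_2}\|_1=1$, so at most one of these two quantities can reach $0.8$. Hence there are only three cases to analyse, and I aim to derive a direct contradiction in each.

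The first two cases are short. If both $\|\veu_{S_1}\|_1<0.8$ and $\|\veu_{S_2}\|_1<0.8$, then (EP-T) forces $v$ to play her defaults on both bundles, giving $\vev=[0.5,0,0.5,0]$. Propagating: $\|\vev_{S_3}\|_1=0.5<0.7$ triggers $u$'s default on $S_3$, while $\|\vev_{S_4}\|_1=0.5\geq 0.4$ triggers the proportional rule on $S_4$, and together these force $\veu=[0,0,0.5,0.5]$ and thus $\|\veu_{S_2}\|_1=1$, contradicting the case. If instead $\|\veu_{S_1}\|_1\geq 0.8$, then $v$ still uses the default on $S_2$, so $v(c_4)=0$ and therefore $\|\vev_{S_3}\|_1=v(c_1)\leq 0.5<0.7$; this triggers $u$'s default on $S_3$ and gives $u(c_1)=0$, so $\|\veu_{S_1}\|_1=u(c_2)\leq 0.5<0.8$ (using the budget $u(c_2)+u(c_3)=0.5$), a contradiction.

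The remaining case $\|\veu_{S_2}\|_1\geq 0.8$ is the most delicate. Here $v$ uses her default on $S_1$ (so $v(c_1)=0.5$, $v(c_2)=0$) and proportionality on $S_2$, and I sub-split on whether $\|\vev_{S_3}\|_1\geq 0.7$. If so, $u$ also responds proportionally on $S_3$; substituting $v$'s rule into $u$'s yields the algebraic identity $d(0.5c+d)=0.25d$ for $c=u(c_3)$, $d=u(c_4)$, whose two branches $d=0$ and $c+2d=0.5$ clash respectively with the sub-case hypothesis and with $c+d\geq 0.8$. Otherwise $u$ plays default on $S_3$, so $u(c_4)=0.5$; then $v$'s proportional response yields $v(c_4)=0.25/(c+0.5)$, and the sub-case hypothesis $v(c_4)<0.2$ forces $c>0.75$, contradicting the budget $u(c_2)+u(c_3)=0.5$.

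The main obstacle I expect is the coupled proportionality sub-case above, where both voters respond proportionally on their delegated bundles: one must simplify the nested fraction $d=0.5\cdot\bigl(0.5d/(c+d)\bigr)/\bigl(0.5+0.5d/(c+d)\bigr)$ to extract the short algebraic identity, and then carefully intersect its solution set with the simultaneous threshold constraints $\|\veu_{S_2}\|_1\geq 0.8$ and $\|\vev_{S_3}\|_1\geq 0.7$. All other cases propagate directly from the players' default vectors and require only a chain of inequality checks.
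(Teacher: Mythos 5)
Your proof is correct; I checked each case and the arithmetic and the algebra in the coupled sub-case all hold up. Both your argument and the paper's are proofs by contradiction via exhaustive case analysis on which thresholds fire, but the decompositions differ. The paper's top-level split is on the $u$-side threshold for $S_3$, i.e.\ on whether $x_{v,c_1}+x_{v,c_4}\geq 0.7$, and each branch is then a chain of inequality propagations (default forces a value, which caps a bundle mass, which forces another default, \dots) ending in a violation of the branch hypothesis; the only place proportionality is used quantitatively is a one-sided bound $x_{u,c_4}\leq 0.25$. Your split is instead on the $v$-side thresholds, sharpened by the observation that $\|\veu_{S_1}\|_1+\|\veu_{S_2}\|_1=1$ so at most one of them can reach $0.8$, which cleanly yields three cases. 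The price is that your case $\|\veu_{S_2}\|_1\geq 0.8$ with $\|\vev_{S_3}\|_1\geq 0.7$ has both voters responding proportionally simultaneously, so you must actually solve the coupled fixed-point equation $d(0.5c+d)=0.25d$ and rule out both roots against the threshold constraints --- a genuinely algebraic step that the paper's choice of split avoids entirely. The benefit is a somewhat more systematic structure (your cases are determined up front by a partition of $u$'s unit mass, rather than by following a propagation chain), and your two easy cases are shorter than the paper's corresponding chains. Both are complete; neither is more general than the other.
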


\begin{proof}
Let us assume, by contradiction, that there is a solution $\vex$ under (EP-T).
We consider two cases: either $x_{v,c_1} + x_{v,c_4} \geq 0.7$ or $x_{v,c_1} + x_{v,c_4} < 0.7$.
 
In the case $x_{v,c_1} + x_{v,c_4} \geq 0.7$ we know that $u$ on $S_3$ splits budget $0.5$ into $c_1$ and $c_4$ proportionally to $x_{v,c_1}$ and $x_{v,c_4}$.
Additionally, we know that $x_{v,c_2} + x_{v,c_3} \leq 0.3 < 0.4 = \epsilon_{u,S_4}$ hence $u$ has to use its default on $S_4$, i.e., $x_{u,c_2} = 0$ and $x_{u,c_3} = 0.5$.
Further, we know that $x_{u,c_1} + x_{u,c_4} = b_{u,S_3} = 0.5$, so $x_{u,c_1} \leq 0.5$.
Therefore, $x_{u,c_1} + x_{u,c_2} \leq 0.5 < 0.8 = \epsilon_{v,S_1}$ so $v$ has to use its default on $S_1$, i.e., $x_{v,c_1} = 0.5$ and $x_{v,c_2} = 0$.
It means also that $x_{v,c_4} \leq 0.5$ and from this and the fact that $u$ on $S_3$ splits budget $0.5$ proportionally we obtain $x_{u,c_4} \leq 0.25$.
Hence, we have $x_{u,c_3} + x_{u,c_4} \leq 0.75 < 0.8 = \epsilon_{v,S_2}$ so $v$ on $S_2$ has to use its default: $x_{v,c_3} = 0.5$ and $x_{v,c_4} = 0$.
But this gives $ x_{v,c_1} + x_{v,c_4} = 0.5$ which is in contradiction with the assumption $x_{v,c_1} + x_{v,c_4} \geq 0.7$.
 
Let us consider the other case, i.e., $x_{v,c_1} + x_{v,c_4} < 0.7$.
First of all, we know that $u$ has to use its default on $S_3$, i.e., $x_{u,c_1} = 0$ and $x_{u,c_4} = 0.5$.
It follows that $x_{u,c_1} + x_{u,c_2} \leq 0.5 < 0.8 = \epsilon_{v,S_1}$ hence $v$ has to use its default on $S_1$ so $x_{v,c_1} = 0.5$ and $x_{v,c_2} = 0$.
We have two subcases: either $x_{v,c_2} + x_{v,c_3} \geq 0.4$ or $x_{v,c_2} + x_{v,c_3} < 0.4$.
\begin{itemize}
  \item when $x_{v,c_2} + x_{v,c_3} \geq 0.4$ then $u$ splits budget $0.5$ on $S_4$ proportionally to $x_{v,c_2}$ and $x_{v,c_3}$, but $x_{v,c_2} = 0$ so we have $x_{u,c_2} = 0$ and $x_{u,c_3} = 0.5$.
  \item when $x_{v,c_2} + x_{v,c_3} < 0.4$ then $u$ has to use its default on $S_4$, so $x_{u,c_2} = 0$ and $x_{u,c_3} = 0.5$.
\end{itemize}
Notice that in both subcases we has to have $x_{u,c_2} = 0$ and $x_{u,c_3} = 0.5$.
It follows that $x_{u,c_3} + x_{u,c_4} = 1 \geq 0.8 = \epsilon_{v,S_2}$ so $v$ wants to split budget $0.5$ among $c_3$ and $c_4$ proportionally to $x_{u,c_3}$ and $x_{u,c_4}$ which are equal, hence $x_{v,c_3} = x_{v,c_4} = 0.25$.
But this gives $ x_{v,c_1} + x_{v,c_4} = 0.75$ which is in contradiction with the assumption $x_{v,c_1} + x_{v,c_4} < 0.7$.
This finishes the description of the example.
\end{proof}

\subsection*{Exact Proportionality with Thresholds, Interpolated (EP-TI)}
In (EP-T), there is a sharp ``loss of confidence'' in $v$'s delegate at the threshold $\epsilon_{v,S}$.
This causes that there might be no solution at all for (EP-T) instance (as in Example~\ref{example:ep-t}).
It may be better to require exact proportionality as long as $\|\vex_{\delta(v,S), S}\|_1 \geq \epsilon_{v,S}$, but then gradually transition to $v$'s default $\ved_{v,S}$ instead of switching to it abruptly.
Formally, we define $f(\vex)_{v,S} = (\vex_{\delta(v,S),S} / \|\vex_{\delta(v,S),S}\|_1) \cdot b_{v,S}$ if $\|\vex_{\delta(v,S),S}\|_1 \geq \epsilon_{v,S}$, otherwise 
\begin{align}
  f(\vex)_{v,S} = \frac{\hspace{-4pt}\vex_{\delta(v,S),S} + (\epsilon_{v,S} - \|\vex_{\delta(v,S),S}\|_1) \cdot \ved_{v,S}}{\|\vex_{\delta(v,S),S} + (\epsilon_{v,S} - \|\vex_{\delta(v,S),S}\|_1) \cdot \ved_{v,S}\|_1} \cdot b_{v,S} \enspace .\label{eq:interpolation-def}
\end{align}
This expression goes from the delegate's solution to $v$'s default as the delegate's support goes from $\epsilon_{v,S}$ to zero.
Hence, $f(\vex)_{v,S}$ is continuous which may be desired property.
A solution under (EP-TI) for an instance from Example~\ref{example:ep-t} is:
$\vex_v^* = [0.5, 0.0, 0.423, 0.077], \vex_u^* = [0.39154, 0.0, 0.5, 0.10846]$.
Let us demonstrate correctness of this solution with respect to delegation $\delta(v,S_2)=u$.
We have $\| \vex_{u,S_2}^* \|_1 =  x_{u,c_3}^* + x_{u,c_4}^* = 0.60846$ which is smaller than $\epsilon_{v,S_2} = 0.8$ so the best-response for $v$ is defined as an interpolation.
First we calculate the interpolation but without proper scaling.
\begin{align}
  \vex_{u,S_2}^* + (\epsilon_{v,S_2} - \|\vex_{u,S_2}^*\|_1) \cdot \ved_{v,S_2}
  = &[0.5, 0.10846] + (0.8 - 0.60846) \cdot [0.5, 0]
  = [0.59577, 0.10846] \label{eq:epti-solution}
\end{align}
Then we scale this solution to achieve weight $0.5 = b_{v,S_2}$ using, indeed, definition of function $f$ under (EP-TI):
\begin{align*}
  f(\vex^*)_{v,S_2} \stackrel{\eqref{eq:interpolation-def}}{=}& \frac{\hspace{-4pt}\vex_{u,S_2}^* + (\epsilon_{v,S_2} - \|\vex_{u,S_2}^*\|_1) \cdot \ved_{v,S_2}}{\|\vex_{u,S_2}^* + (\epsilon_{v,S_2} - \|\vex_{u,S_2}^*\|_1) \cdot \ved_{v,S_2}\|_1} \cdot b_{v,S_2} \\
  \stackrel{\eqref{eq:epti-solution}}{=}& \frac{\hspace{-3pt}[0.59577, 0.10846]}{\| [0.59577, 0.10846] \|_1} \cdot 0.5 = [0.423, 0.077]\enspace ,
\end{align*}
which is equal to $\vex_{v,S_2}$ therefore $v$ uses already best-response on $S_2$ in a solution $\vex^*$.
The remaining delegations can be checked analogously.

Because of continuity of $f(\vex)_{v,S}$ under (EP-TI), this proportionality notion may be seen reasonable, however, the fact that the derivative of $f(\vex)_{v,S}$ at $\epsilon_{v,S}$ is not continuous causes 'sharp' changes when changing a vote slightly---it can be noticed in the following example.

\begin{example}\label{example:ep-ti}
Let $V = \{v,u\}$, $C = \{c_1, c_2, c_3\}$, and let $v$ delegate $S = \{c_1, c_2\}$ with budget $1$ to $u$ by defining its weight to be $w_{v,S} = 100$ (high confidence).
The default vote for this delegation is $\ved_{v,S} = [0,1]$.
If voter $u$ define its cumulative ballot as $\veu = [0.015, 0, 0.985]$, then the only solution under (EP-TI) for $v$ is $[1,0,0]$;
indeed, we have $\epsilon_{v,S} = 0.01$ and $\|\vex_{u,S}\|_1 = 0.015$, so in such a case $v$ keeps a support ratio of $\veu_{S} = [0.015, 0]$.
Next, we analyze how small changes to $\veu$ may change the solution.
\begin{itemize}

\item If $u$ slightly changes its ballot by moving a support of $\gamma = 0.005$ from $c_1$ to $c_3$, then $\veu = [0.01, 0, 0.99]$ and the only solution under (EP-TI) for $v$ has not be changed, i.e., it is $[1,0,0]$.

\item On the other hand, if $u$ changes its ballot by moving a support of $2\gamma$ from $c_1$ to $c_3$, then $\veu = [0.005, 0, 0.995]$, and the only solution under (EP-TI) for $v$ uses interpolation and it is now $[0.5,0.5,0]$.
Indeed, we have
\begin{align*}
  f(\vex)_{v,S} \stackrel{\eqref{eq:interpolation-def}}{=}& \frac{\hspace{-4pt}\vex_{u,S} + (\epsilon_{v,S} - \|\vex_{u,S}\|_1) \cdot \ved_{v,S}}{\|\vex_{u,S} + (\epsilon_{v,S} - \|\vex_{u,S}\|_1) \cdot \ved_{v,S}\|_1} \cdot b_{v,S} \\
  =& \frac{\hspace{-4pt}[0.005,0] + (0.01 - 0.005) \cdot [0,1]}{\|[0.005,0] + (0.01 - 0.005) \cdot [0,1]\|_1} \cdot 1 \\
  =& \frac{\hspace{-2pt}[0.005,0.005]}{\|[0.005,0.005]\|_1} = [0.5,0.5]\enspace .
\end{align*}

\end{itemize}

All in all, a change of $\gamma$ does not alter a solution, but a change of $2\gamma$ alters it significantly.

\end{example}

\subsection*{Weighted Convex Combinations (WCC)}
To avoid the behaviour of (EP-TI), as demonstrated in Example~\ref{example:ep-ti}, we consider the following:
  the solution $v$ is always a combination of her default $\ved_{v,S}$ with her delegate's solution $\vex_{\delta(v,S),S}$, and her confidence in $\delta(v,S)$ is expressed by taking her delegate's solution with weight $w_{v,S}$.
That is, the desired solution is exactly $\ved_{v,S} + w_{v,S} \cdot \vex_{\delta(v,S),S}$ scaled appropriately to sum up to $b_{v,S}$, that is,
\begin{align}
  f(\vex)_{v,S} = \frac{\hspace{-4pt}\ved_{v,S} + w_{v,S} \cdot \vex_{\delta(v,S),S}}{\|\ved_{v,S} + w_{v,S} \cdot \vex_{\delta(v,S),S}\|_1} \cdot b_{v,S} \enspace . \label{eq:wcc-def}
\end{align}
Observe that this means that, if the weight is fixed, then the influence of the delegate decreases as their support decreases, and with the support of the delegate fixed, their influence increases as the weight $w_{v,S}$ increases.

Below we apply the (WCC) proportionality notion to the instance from Example~\ref{example:ep-ti}.
First of all, the initial solution (when $\veu = [0.015, 0, 0.985]$) for $v$ is different---it is $[0.6, 0.4, 0]$---because:
\begin{align*}
  f(\vex)_{v,S} \stackrel{\eqref{eq:wcc-def}}{=}& \frac{\hspace{-4pt}\ved_{v,S} + w_{v,S} \cdot \vex_{u,S}}{\|\ved_{v,S} + w_{v,S} \cdot \vex_{u,S}\|_1} \cdot b_{v,S} \\
  =& \frac{\hspace{-3pt}[0,1] + 100 \cdot [0.015, 0]}{\|[0,1] + 100 \cdot [0.015, 0]\|_1} \cdot 1 = [0.6, 0.4] \enspace .
\end{align*}
Note that the default vote has a strong impact on this solution because the support of $u$ for $S$ is small.
Let us further analyze two changes of $u$'s vote, as in Example~\ref{example:ep-ti}:
\begin{itemize}
  \item If $u$ changes its ballot to $[0.01, 0, 0.99]$, then the only solution under (WCC) for $v$ is $[0.5,0.5,0]$.
  \item If $u$ changes its ballot to $[0.005, 0, 0.995]$, then the only solution under (WCC) for $v$ is $[\frac{1}{3},\frac{2}{3},0]$.
\end{itemize}
Note that the solution changes more smoothly in the case of (WCC) than in (EP-TI).

\section{Existence and Structure of Solutions}\label{section:meta}

To discuss the existence and structure of solutions and the complexity of computing them, we need to introduce some notions from fixed-point theory.
Given a function $f: \R^n \to \R^n$, a point $\vex \in \R^n$ is called a \emph{fixed-point of $f$} if $f(\vex) = \vex$.
The next result is of fundamental importance:
\begin{proposition}[{Brouwer's theorem~\cite{le1912abbildungen}}]
Let $f$ be a continuous function from a compact convex set $K$ to itself.
Then $f$ has a fixed-point.
\end{proposition}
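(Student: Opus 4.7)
The plan is to establish Brouwer's theorem in two stages: first for the standard $n$-simplex via Sperner's lemma, and then for a general compact convex $K$ by a retraction argument.

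For the simplex stage, let $\Delta^n = \{\vex \in \R^{n+1} : x_i \geq 0,\; \sum_i x_i = 1\}$ and let $f : \Delta^n \to \Delta^n$ be continuous. I would take a sequence of subdivisions of $\Delta^n$ with mesh tending to $0$. To each vertex $\vev$ of the subdivision I would assign a label $\ell(\vev) \in \{i : v_i > 0 \text{ and } f(\vev)_i \leq v_i\}$. This set is non-empty, since if every coordinate with $v_i > 0$ were strictly increased by $f$, the sum $\sum_i f(\vev)_i$ would exceed $\sum_i v_i = 1$, contradicting $f(\vev) \in \Delta^n$. The labeling also respects Sperner's boundary condition, because $v_j = 0$ excludes index $j$. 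Sperner's lemma then yields, in each subdivision, a fully-labeled simplex whose vertices $\vev^{(0)}, \ldots, \vev^{(n)}$ satisfy $f(\vev^{(i)})_i \leq v^{(i)}_i$. Compactness of $\Delta^n$ lets us pass to a subsequence along which all these vertices converge to a common $\vex^\star$, and continuity of $f$ gives $f(\vex^\star)_i \leq x^\star_i$ for every $i$. Since both sides sum to $1$, equality must hold coordinatewise, i.e., $f(\vex^\star) = \vex^\star$.

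For general compact convex $K \subseteq \R^n$, I would embed $K$ in a large closed ball (or simplex) $B$ and invoke the nearest-point retraction $r : B \to K$. This map is well-defined because the squared Euclidean distance to a point is strictly convex and attains its minimum uniquely on the closed convex set $K$, and it is continuous by a standard convex-analytic argument. Then $f \circ r : B \to K \subseteq B$ is continuous and, by the simplex case, has a fixed point $\vex^\star$. Since $(f \circ r)(\vex^\star) \in K$, we deduce $\vex^\star \in K$, whence $r(\vex^\star) = \vex^\star$ and finally $f(\vex^\star) = \vex^\star$.

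The main obstacle is Sperner's lemma itself, a purely combinatorial statement proved by induction on $n$, tracking the parity of ``doors'' --- fully-labeled $(n-1)$-dimensional faces --- shared between adjacent $n$-simplices. The base case $n = 1$ is a short interval-counting argument, and the inductive step pairs up interior doors while applying the inductive hypothesis on the boundary. Apart from Sperner, the only other nontrivial ingredient is continuity of the nearest-point projection, which is standard; the mesh-refinement limit and the final continuity argument are routine.
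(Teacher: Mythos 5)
The paper does not prove this proposition at all: it is stated as a classical result with a citation to Brouwer's original 1912 paper and used as a black box (both directly and via its restatement in Theorem~\ref{thm:meta_exists}). Your proof is therefore not comparable to anything in the paper, but it is the standard and correct combinatorial proof via Sperner's lemma. The labeling rule $\ell(\vev) \in \{i : v_i > 0 \text{ and } f(\vev)_i \leq v_i\}$ is well-defined by the averaging argument you give, satisfies Sperner's boundary condition exactly because the label set excludes vanishing coordinates, and the limit of fully-labeled simplices along a subsequence yields $f(\vex^\star)_i \leq x^\star_i$ for all $i$, which forces equality since both sides sum to $1$. The reduction from a general compact convex $K$ to the ball via the nearest-point retraction is also the standard route, and your justification (strict convexity of the squared distance for uniqueness, nonexpansiveness for continuity) is sound. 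Two small points worth making explicit: the statement as given in the paper omits that $K$ should be a nonempty subset of some $\R^n$ (the retraction $r$ requires $K \neq \emptyset$, and the finite-dimensional setting is what makes the Sperner argument applicable; in infinite dimensions the statement as written is false and one needs Schauder's theorem), and if you take $B$ to be a ball rather than a simplex you should note that the fixed-point property transfers along the homeomorphism between the ball and a simplex of the same dimension. Neither gap is serious; your argument buys a self-contained, elementary proof where the paper simply defers to the literature.
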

We say that a correspondence $f: K \to 2^K$ has a \emph{closed graph} if the set $\{(\vex,\vey) \in K \times K \mid \vey \in f(\vex) \}$ is closed.
A fixed-point of a correspondence $f$ is a point $\vex$ s.t. $\vex \in f(\vex)$.
\begin{proposition}[{Kakutani's theorem~\cite{kakutani1941generalization}}]
  Let $K$ be a non-empty, compact, and convex subset of $\R^n$, and let $f$ have a closed graph and $f(\vex)$ be non-empty and convex for all $\vex \in K$. Then $f$ has a fixed-point.
\end{proposition}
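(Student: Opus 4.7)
The plan is to reduce Kakutani's theorem to Brouwer's theorem via a finite-dimensional approximation, invoking the closed-graph and convex-values hypotheses only in the final limiting step.

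First I would build continuous single-valued approximants. Fix a sequence of simplicial subdivisions $\mathcal{T}_n$ of $K$ whose mesh tends to $0$. At each vertex $v$ of $\mathcal{T}_n$, pick some $y_v \in f(v)$ (non-empty by hypothesis), and extend affinely on each simplex to obtain a continuous map $f_n \colon K \to K$; the image stays in $K$ because each $y_v \in K$ and $K$ is convex. Brouwer's theorem then produces a fixed point $\vex_n = f_n(\vex_n)$, and compactness of $K$ lets us pass to a subsequence along which $\vex_n \to \vex^* \in K$.

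The key step is to show $\vex^* \in f(\vex^*)$. Write $\vex_n = \sum_{i=0}^d \lambda_i^{(n)} v_i^{(n)}$ as a convex combination of the vertices of the simplex of $\mathcal{T}_n$ containing $\vex_n$, so that $\vex_n = f_n(\vex_n) = \sum_{i=0}^d \lambda_i^{(n)} y_{v_i^{(n)}}$. Because the mesh of $\mathcal{T}_n$ tends to $0$, every $v_i^{(n)} \to \vex^*$. Passing to a further subsequence, we may assume $\lambda_i^{(n)} \to \lambda_i^*$ with $\sum_i \lambda_i^* = 1$ and $y_{v_i^{(n)}} \to y_i^* \in K$. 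Applying the closed-graph hypothesis to the pairs $(v_i^{(n)}, y_{v_i^{(n)}})$ yields $y_i^* \in f(\vex^*)$, and then $\vex^* = \lim_n f_n(\vex_n) = \sum_i \lambda_i^* y_i^* \in f(\vex^*)$ by the convexity of $f(\vex^*)$.

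The delicate point is this last step, which is where both hypotheses are genuinely used: the closed graph transfers information about $f$ at the approximating vertices to the value $f(\vex^*)$, while convex-valuedness is needed to keep the convex combination $\sum_i \lambda_i^* y_i^*$ inside $f(\vex^*)$. Standard counterexamples show that dropping either hypothesis breaks the conclusion, so any proof must somewhere exploit both. An alternative route through the KKM lemma or a direct Sperner-type combinatorial argument is possible, but the approximation-via-Brouwer argument above is the most economical self-contained proof.
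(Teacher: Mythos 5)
The paper does not prove this statement at all: Kakutani's theorem is quoted as a known result with a citation to Kakutani's 1941 paper and is used as a black box (to establish existence of solutions under (EP)). So there is no proof in the paper to compare against; what you have written is the standard textbook derivation of Kakutani from Brouwer via piecewise-affine selections on successively finer triangulations, and it is essentially correct. The limiting argument is right: affinity of $f_n$ on each simplex gives $\vex_n = \sum_i \lambda_i^{(n)} y_{v_i^{(n)}}$, the vanishing mesh forces all vertices $v_i^{(n)}$ to converge to $\vex^*$, the closed graph puts each limit $y_i^*$ into $f(\vex^*)$, and convexity of $f(\vex^*)$ closes the loop. The one technical wrinkle you gloss over is the phrase ``simplicial subdivisions of $K$'': an arbitrary non-empty compact convex $K \subseteq \R^n$ need not be a polytope, so it does not literally admit simplicial subdivisions. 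The standard repair is either to first prove the theorem on a simplex $\Delta \supseteq K$ and transfer it to $K$ via the nearest-point retraction $r\colon \Delta \to K$ (replacing $f$ by $f \circ r$, whose values lie in $K$, so any fixed point lands in $K$), or to restrict attention to the affine hull of $K$ and use a homeomorphism with a simplex. With that one sentence added, your proof is complete and is the expected self-contained argument for a result the paper simply cites.
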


We will use these theorems to show the existence of solutions.
But, before it, we discuss how to compute solutions. First, notice that complexity classes such as P and NP are of no use when a solution is guaranteed to exist.
Thus, we are interested in the hierarchy of classes below and including TFNP (``Total Function Nondeterministic Polynomial''), which is the class of function problems that are guaranteed to have an answer and this answer can be checked in polynomial time.
PPAD is a subclass of TFNP which is known to be complete for Brouwer fixed-points, meaning there is a function $f$ satisfying the conditions of Brouwer's theorem, such that any problem in PPAD can be reduced to finding a fixed-point of $f$.
PPAD is in general thought to be hard, in the sense that no polynomial algorithm for PPAD is assumed to exist.
Before we can state the current state of the art, we have to introduce yet another notion.
Even when a fixed-point $\vex$ is guaranteed to exist, it might not be rational; thus, it is common to turn to discussing approximate fixed-points.
A point $\vex$ is an \emph{$\epsilon$-weak approximate} fixed-point if $\|\vex - f(\vex)\|_\infty \leq \epsilon$.
It is an \emph{$\epsilon$-strong approximate} fixed-point if it is at distance at most $\epsilon$ from some fixed-point $\vex^*$.
Most results, as well as our treatment, focus on weak-approximations.
The currently best result about finding Brouwer weak-approximate fixed-points is this:
\begin{proposition}[\cite{ChenD08}] \label{prop:ppad}
  An $\epsilon$-weak approximate fixed-point of an $M$-Lipschitz continuous function $f: \R^n \to \R^n$ requires $\Theta((1/\epsilon \cdot M)^{n-1})$ queries to be found.
\end{proposition}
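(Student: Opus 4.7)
The plan is to establish matching upper and lower bounds separately, working with a normalized domain such as the unit cube $[0,1]^n$ (the constants absorb into $M$ on rescaling). Writing $N = M/\epsilon$, the goal is to produce an algorithm that locates an $\epsilon$-weak approximate fixed-point using $O(N^{n-1})$ function queries, and to exhibit a family of $M$-Lipschitz instances on which any algorithm (even randomized) must make $\Omega(N^{n-1})$ queries.

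For the upper bound I would discretize the domain with a regular grid of side-length $\Theta(\epsilon/M)$, giving roughly $N^n$ grid cells and $N^n$ vertices. At each vertex $\vev$, classify the displacement $f(\vev)-\vev$ into one of $n+1$ ``Sperner colors'' according to the first coordinate in which $f(\vev)-\vev$ is non-positive (with a tie-breaking convention on the boundary that satisfies Sperner's boundary condition). A completely labeled simplex then forces $\|f(\vex)-\vex\|_\infty \le \epsilon$ at any vertex of that simplex by the $M$-Lipschitz property, and this is the required approximate fixed-point. The key point for getting $N^{n-1}$ rather than $N^n$ is that one does not need to examine every vertex: by a Scarf-like pivoting (path-following) procedure one starts from an artificial completely labeled simplex on the boundary of an $(n-1)$-dimensional face and follows the unique sequence of ``almost completely labeled'' simplices whose shared face has $n$ distinct colors. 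Parity arguments \`a la Sperner guarantee that the path terminates at a genuine panchromatic simplex inside the domain, and the path visits at most $O(N^{n-1})$ simplices because each step crosses a single $(n{-}1)$-dimensional face and the relevant combinatorial object lives on a $(n{-}1)$-dimensional sheet of the triangulation.

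For the lower bound I would use an adversarial/embedding argument. The standard tool is to embed a hard combinatorial search problem, e.g., \textsc{End-of-the-Line}, into the fixed-point problem, so that any query-efficient algorithm could be converted into a query-efficient algorithm for that combinatorial problem. Concretely, partition $[0,1]^n$ into $O(N^{n-1})$ ``tubes'' of width $\Theta(1/N)$ arranged so that a single hidden path winds through them; define $f$ inside a tube to be a translation along the tube and outside to be the identity (smoothly interpolated to be $M$-Lipschitz, using that the tube width is $\Theta(1/N) = \Theta(\epsilon/M)$). An $\epsilon$-weak approximate fixed-point can only occur at the endpoint of the hidden path, and any adaptive algorithm that asks fewer than $\Omega(N^{n-1})$ queries misses the endpoint with high probability against a random placement of the path by an information-theoretic / Yao's-principle argument.

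The main obstacle will be the lower-bound construction: one must simultaneously maintain $M$-Lipschitz continuity, ensure the only approximate fixed-points are concentrated in a small region that the adversary can hide, and have enough ``room'' to hide $\Omega(N^{n-1})$ essentially indistinguishable instances. The delicate part is the smooth interpolation near the tube boundaries so that neither the displacement magnitude elsewhere drops below $\epsilon$ (which would create spurious approximate fixed-points and destroy hardness) nor the Lipschitz constant exceeds $M$. The upper bound, by contrast, is relatively standard once the correct path-following subroutine on the $(n{-}1)$-skeleton is set up; the only care needed there is to verify that Lipschitzness plus the grid spacing implies the panchromatic simplex yields an $\epsilon$-fixed-point in the $\ell_\infty$-sense used in the definition.
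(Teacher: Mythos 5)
First, note that the paper offers no proof of this statement: it is imported verbatim from Chen and Deng~\cite{ChenD08} as a black box, so there is no in-paper argument to compare yours against. Judged on its own merits, your lower-bound sketch is in the right spirit --- Chen and Deng's $\Omega((M/\epsilon)^{n-1})$ bound does come from hiding a long path/tube in the grid and arguing that few queries cannot locate its end --- though, as you acknowledge, the delicate part (keeping the interpolated function $M$-Lipschitz while ensuring no spurious $\epsilon$-approximate fixed points appear off the path) is exactly where all the work lies.

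The genuine gap is in your upper bound. A Scarf/Lemke--Howson pivoting path of almost-completely-labeled simplices is \emph{not} confined to an $(n-1)$-dimensional sheet of the triangulation: each pivot step does cross a single facet, but the path is free to wander through the full $n$-dimensional volume, and in the worst case it visits $\Theta(N^{n})$ simplices (indeed, the standard PPAD-hardness embeddings construct instances whose path snakes through essentially every cell). So path-following only certifies an $O(N^{n})$ query bound, not $O(N^{n-1})$. The actual $O(N^{n-1})$ algorithm of Chen and Deng is a divide-and-conquer procedure: query an entire separating $(n-1)$-dimensional slice of the grid (costing $O(N^{n-1})$ queries), use the direction labels on that slice together with the boundary condition to determine which of the two halves must contain a completely labeled cell, and recurse on that half; the slice costs form a geometric series summing to $O(N^{n-1})$. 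Replacing your pivoting subroutine with this slicing argument is not a cosmetic fix --- it is a different algorithmic idea, and without it the claimed upper bound does not follow.
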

$M$ is intuitively an upper bound on the first derivatives of $f$.
This kind of complexity is essentially a polynomial-time approximation scheme in fixed dimension~$n$, but exponential in $n$ otherwise.

As we will see, sometimes the proportionality conditions we use can be described with quadratic or other types of constraints, so results from mathematical programming become relevant.
A \emph{Quadratically constrained quadratic program (QCQP)} is a collection of quadratic constraints $g_i(\vex) \leq 0$ and a quadratic objective function $g_0(\vex)$ to be minimized over all $\vex \in \R^n$.
The decision problem of the \emph{existential theory of the reals} is to find a solution $\vex$ of a formula $\varphi(\vex)$ that is a quantifier-free formula involving equalities and inequalities of real polynomials.
The following effective theorem was proved by Renegar~\cite{Renegar92}:
\begin{proposition}[{\cite{Renegar92}}] \label{prop:renegar}
  An $\epsilon$-strong approximation $\vex_\epsilon$ of a satisfying assignment $\vex \in \R^n$ of a quantifier-free formula $\varphi(\vex)$ involving $m$ polynomial inequalities of maximum degree $d$ and with coefficients of total encoding length $L$ can be found in time $\max\{L, \log(1/\epsilon)\} \cdot \polylog(L) (md)^{O(n)}$.
\end{proposition}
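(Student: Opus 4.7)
The plan is to follow the critical points method that is standard in effective real algebraic geometry; Renegar's bound is obtained by combining a smooth perturbation argument with a careful bit-complexity analysis of a parametric polynomial system solver. The reduction target is to find at least one sample point in each connected component of the semi-algebraic set $\{\vex \in \R^n : \varphi(\vex)\}$, after which one tests $\varphi$ at these sample points. Since the statement asks for an $\epsilon$-strong approximation, one does not need to represent sample points exactly; it suffices to isolate them to accuracy $\epsilon$ using real root isolation.

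First I would normalise $\varphi$: replace equalities $p=0$ by $p^2 \leq 0$, handle strict inequalities by standard infinitesimal perturbations, and intersect with a large ball $\|\vex\|^2 \leq R$ (with $R$ chosen from a priori bounds on the magnitude of any real solution in terms of $L$) to make the feasible set compact without losing solutions. Next, I would perturb each inequality $g_i(\vex) \leq 0$ to $g_i(\vex) \leq -\eta_i$ with generic infinitesimals $\eta_i$, so that the boundary of the resulting set is smooth and the incidence structure of active constraints is in general position. A reference point $\vep$ is chosen (generically), and on each stratum indexed by a subset $I \subseteq [m]$ of active constraints with $|I| \leq n$, critical points of the squared distance function $\|\vex - \vep\|^2$ restricted to that stratum are computed via Lagrange multipliers. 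By Sard-type arguments, every bounded connected component of the stratum contains at least one such critical point, so the union of critical points over all strata is a sample set meeting every component.

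The counting and complexity now go via Bezout: on each stratum with $|I|$ active constraints, the Lagrange system has at most $d^{O(n)}$ complex solutions, and there are at most $\binom{m}{\leq n} = (md)^{O(n)}$ strata, giving the factor $(md)^{O(n)}$ in the bound. To extract real solutions and produce an $\epsilon$-strong approximation of each, I would eliminate variables to a univariate polynomial using resultants (or a Gröbner-free multivariate resolvent, e.g.\ a Rational Univariate Representation), then use a real-root isolator (Uspensky/Descartes or Sturm queries) to produce intervals of width $\leq \epsilon$ around each real root. Cauchy-style root bounds and Mignotte separation bounds on the univariate resolvent, whose degree is $(md)^{O(n)}$ and whose coefficient bit-length is $L \cdot (md)^{O(n)}$, convert directly into the claimed $\max\{L,\log(1/\epsilon)\} \cdot \polylog(L)$ dependence once fast integer arithmetic is invoked.

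The main obstacle is the bit-complexity bookkeeping across the algebraic manipulations: one must show that each intermediate object (the perturbed system, the resolvent, the isolating intervals) has integer coefficients of length only $L \cdot (md)^{O(n)}$ and that comparisons and separations can be certified within $\log(1/\epsilon)$ additional bits, while keeping the number of arithmetic operations at $(md)^{O(n)}$. The combinatorial enumeration of strata and the choice of ``generic'' perturbation also have to be made algorithmic; here one either works symbolically with infinitesimals (and deforms to the limit at the end) or chooses numerical perturbations of polynomially bounded bit-length via effective versions of Sard's lemma. Putting these pieces together yields the stated complexity.
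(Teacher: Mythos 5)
There is no proof of this proposition in the paper to compare yours against: Proposition~\ref{prop:renegar} is imported verbatim from Renegar's work as a black-box citation and is used only as an ingredient in Theorem~\ref{thm:fixeddim} and Theorem~\ref{thm:meta_compute}. So the only meaningful question is whether your sketch is a viable reconstruction of Renegar's theorem itself, which is a substantial piece of effective real algebraic geometry well outside the scope of this paper.

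As such a reconstruction, your outline of the critical-points method is recognisable and broadly sound, but it is closer in spirit to the Grigoriev--Vorobjov / Basu--Pollack--Roy line than to Renegar's own route, which constructs univariate representations of sample points via the multivariate ($u$-)resultant of a deformed system and then performs a separate, careful bit-complexity analysis, rather than stratifying by active constraint sets with symbolic infinitesimals. Two concrete issues. First, the perturbation $g_i(\vex)\le -\eta_i$ goes the wrong way: it shrinks the feasible set and can render a nonempty set empty (compare $x^2\le 0$ with $x^2\le -\eta$), so one must relax rather than tighten, and then argue that the limit points of the perturbed sample points, as the infinitesimals tend to $0$, meet every connected component of the original set and can still be approximated to accuracy $\epsilon$; this limit-taking is where much of the genuine difficulty and the $\max\{L,\log(1/\epsilon)\}$ dependence lives, and you pass over it. Second, the assertion that the bit-complexity bookkeeping ``converts directly'' into a dependence on $L$ that is only $\max\{L,\log(1/\epsilon)\}\cdot\polylog(L)$, i.e.\ near-linear rather than merely polynomial, is precisely the delicate part of Renegar's analysis and cannot be waved through. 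For the purposes of this paper none of this needs to be re-proved; citing Renegar is the right move.
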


Finally, a simple heuristic to search for a fixed-point is the \emph{simple iteration} heuristic, which constructs a sequence of points $\vex^0, \vex^1, \dots$ by taking $\vex^0$ an arbitrary initial point from the set $K$, and then setting $\vex^i = f(\vex^{i-1})$.
This procedure converges to a fixed-point if the function $f$ is a \emph{contraction}, which means that there is a non-negative constant $q < 1$ such that for every $\vex \in K$, $\|f(\vex) - f(f(\vex))\| \leq q \cdot \|\vex - f(\vex)\|$ under some norm; this is Banach's fixed-point theorem~\cite{banach1922operations}.
(Essentially, the proof follows as the distance between iterations decreases geometrically with the coefficient $q$.)
There are many other heuristics for finding fixed-points, usually in the guise of ``zero-finding'' or ``root-finding'' heuristics, because finding $\vex$ such that $f(\vex) = \vex$ can equivalently be seen as finding a zero of the function $g(\vex) = f(\vex) - \vex$ or $g(\vex) = \|f(\vex) - \vex\|_1$, for example.

We are now ready to see what these results say about our four notions of proportionality.

\subsection*{Exact Proportionality (EP)}

One can verify that $f$ is a correspondence satisfying the conditions of Kakutani's theorem, thus, a solution is always guaranteed to exist.
However, because $f$ is a correspondence and not a function, it is unclear how to define the simple iteration procedure or apply other zero-finding heuristics.
The problem can be defined as a QCQP as follows.

\begin{observation}\label{obs:ep-constraint}
  If a voter $v$ delegates $S \in \calS_v$ to $u$ and $b_{v,S} > 0$, then proportionality under (EP) is equivalent to
  $$\forall c_1,c_2 \in S \quad \left( u(c_2) \neq 0 \implies \frac{v(c_1)}{v(c_2)} = \frac{u(c_1)}{u(c_2)} \right)\enspace .$$
\end{observation}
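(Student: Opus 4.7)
The plan is to verify the equivalence by directly unpacking the (EP) definition according to the two regimes $\|\vex_{u,S}\|_1 = 0$ and $\|\vex_{u,S}\|_1 > 0$, where $u := \delta(v,S)$. The ratio equality $v(c_1)/v(c_2) = u(c_1)/u(c_2)$ should be read in its cross-multiplied (bilinear) form $v(c_1)\cdot u(c_2) = v(c_2)\cdot u(c_1)$, both to sidestep division-by-zero issues and because that is the shape required when encoding the constraint as a QCQP, which is the downstream use.

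For the forward direction, given any $c_1,c_2\in S$ with $u(c_2)\neq 0$, I would observe that $\|\vex_{u,S}\|_1\geq u(c_2)>0$, so (EP) forces $\vex_{v,S} = (\vex_{u,S}/\|\vex_{u,S}\|_1)\cdot b_{v,S}$. Reading off the $c_1$- and $c_2$-coordinates then yields the ratio equality immediately; the hypothesis $b_{v,S}>0$ together with $u(c_2)>0$ ensures $v(c_2)>0$, making the ratio well-defined in the sense stated.

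For the backward direction, if $\|\vex_{u,S}\|_1=0$ then $\vex_{v,S}$ is by assumption a nonnegative vector of weight $b_{v,S}$, which is exactly what (EP) accepts in the zero-support regime. Otherwise, I would pick any $c^*\in S$ with $u(c^*)>0$, and for every $c\in S$ apply the implication to the pair $(c,c^*)$ to obtain $v(c)\cdot u(c^*) = v(c^*)\cdot u(c)$, that is $v(c) = v(c^*)\cdot u(c)/u(c^*)$. Summing over $c\in S$ and using $\|\vex_{v,S}\|_1 = b_{v,S}$ determines $v(c^*) = b_{v,S}\cdot u(c^*)/\|\vex_{u,S}\|_1$, and therefore $\vex_{v,S} = (\vex_{u,S}/\|\vex_{u,S}\|_1)\cdot b_{v,S}$, which is exactly the (EP) scaling.

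The only delicate point is handling candidates $c\in S$ with $u(c)=0$ in the backward direction (for which the implication says nothing directly) and ruling out the degenerate case $v(c^*)=0$ that would make the normalization step vacuous. Both are dispatched by the cross-multiplied form with $c_2=c^*$: it forces $v(c)=0$ whenever $u(c)=0$, and if we had $v(c^*)=0$ it would force $v(c)=0$ for all $c\in S$, contradicting $\|\vex_{v,S}\|_1 = b_{v,S}>0$. These are essentially bookkeeping, not a genuine obstacle; the substantive content of the observation is just the rearrangement of the (EP) scaling formula into a family of pairwise bilinear constraints.
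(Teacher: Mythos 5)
Your proof is correct; the paper states this as an Observation without any proof (it immediately passes to the cross-multiplied quadratic constraint $v(c_1)\cdot u(c_2) = u(c_1)\cdot v(c_2)$), and your argument is exactly the routine verification being implicitly relied on. The points you flag as delicate --- reading the ratio equality in bilinear form, forcing $v(c)=0$ when $u(c)=0$, and excluding $v(c^*)=0$ via $\|\vex_{v,S}\|_1=b_{v,S}>0$ --- are handled correctly and match the paper's intended reading.
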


From this we can derive a quadratic constraint: $v(c_1) \cdot u(c_2) = u(c_1) \cdot v(c_2)$.
Notice that the constraint behaves precisely as required also in the case when $u$ assigns zero support to $S$, because both sides of the equality will be zero and thus all possible values are permissible for $v(c_1)$ and $v(c_2)$.
A quadratic program that models the problem consists of the following variables and constraints.
\begin{itemize}
	\item definition of the variables: for every voter $v \in V$ and candidate $c \in C$ we define a variable $x_{v,c} \in [0,1]$ that is the fractional support $v$ gives to $c$:
	\begin{align*}
		x_{v,c} \in [0,1] &\quad \forall v \in V, c \in C\enspace ,
	\end{align*}
	\item constraint for cumulative ballots: we fix that every voter splits budget $1$ to the candidates:
	\begin{align}
		\sum_{c \in C} x_{v,c} = 1 \quad \forall v \in V\enspace , \label{eq:sum-to-one}
	\end{align}
	\item delegation budget: voter $v$ has to split budget $b_{v,S}$ among candidates in $S$, hence:
	\begin{align}
		\sum_{c \in S} x_{v,c} = b_{v,S} \quad \forall v \in V, \;\forall S \in \calS_v\enspace , \label{eq:delegation}
	\end{align}
	\item (EP) constraint for every delegation of $S$ by $v$ (due to Observation~\ref{obs:ep-constraint}):
	\begin{align}
		x_{v,c_1} \cdot x_{\delta(v,S),c_2} = x_{\delta(v,S),c_1} \cdot x_{v,c_2} \quad \forall v \in V, \;\forall S \in \calS_v, \;\forall c_1,c_2 \in S\enspace . \label{eq:proportionality}
	\end{align}
\end{itemize}
Thus, one can utilize the theorem of Renegar~\cite{Renegar92} and solve this problem in polynomial time if the dimension is fixed (see Theorem~\ref{thm:fixeddim}).
In practice, there are also many QCQP solvers such as IPopt, Knitro, Gurobi, or Baron, that can be used to solve this problem.

\subsection*{Exact Proportionality with Thresholds (EP-T)}

We have already shown in Proposition~\ref{prop:ep-t-no-solution} that solutions might not exist.
Notice that (EP-T) does not fit the conditions of Brouwer's theorem as $f$ is not continuous.
The lesson here is to be cautious when considering discontinuous best-response functions; while discontinuity does not immediately imply non-existence of solutions, it opens the door to it.

\subsection*{Exact Proportionality with Thresholds, Interpolated (EP-TI)}

Compared with (EP), $f$ is now a function, and we can see that it satisfies the conditions of Brouwer's theorem, so a fixed-point is always guaranteed to exist.
Moreover, we could now use the simple iteration heuristic, as well as any of the zero-finding heuristics.
The problem can be solved by Renegar's algorithm for small $n,m$; see Theorem~\ref{thm:fixeddim}.\footnote{It is also possible to use a QCQP formulation similar to that of (EP) augmented with logical disjunctions that can be formulated using $0/1$ variables, enforced as $x \cdot (1 - x) = 0$.}

\subsection*{Weighted Convex Combinations (WCC)}
Finally, for (WCC), we again observe that the best-response function $f$ is continuous and thus Brouwer's theorem guarantees the existence of a fixed-point.
Moreover, $f$ is amenable to simple iteration and other zero-finding heuristics, and it has continuous derivatives, which can be exploited by many heuristics (unlike (EP-TI), which has discontinuous derivatives).
We also note that (WCC) can be modeled as a QCQP: the constraint \eqref{eq:proportionality} can be replaced by
\begin{align}
		x_{v,c} \cdot \|\ved_{v,S} + w_{v,S} \cdot \vex_{\delta(v,S),S}\|_1 = (d_{v,c} + w_{w,S} \cdot x_{\delta(v,S),c}) \cdot b_{v,S} \quad \forall v \in V, \;\forall S \in \calS_v, \;\forall c \in S\enspace , \label{eq:proportionality_wcc}
\end{align}
which is quadratic in $\vex$ (note that the $\|\bullet\|_1$ in the left hand side is a linear expression in terms of $\vex$).
This also implies that the problem is polynomial-time solvable in fixed dimension by the algorithm of Renegar~\cite{Renegar92} (Theorem~\ref{thm:fixeddim}).

It is natural to wonder whether $f$ is a contraction and whether the simple iteration heuristic \emph{always} converges quickly.
This is, however, not the case:
\begin{lemma} \label{lem:contraction}
  There is an instance with $10$ voters, $5$ candidates, all delegation weights equal to $10$, and a solution $\vex \in \R^{nm}$ such that $\|f(\vex) - f(f(\vex))\|_1 > \|\vex - f(\vex)\|_1$, thus $f$ is not a contraction under the $\ell_1$-norm.
\end{lemma}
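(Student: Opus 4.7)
The statement is purely existential, so the plan is to exhibit and then verify a concrete counterexample rather than argue structurally. Under (WCC) the best-response function $f$ is given in closed form by equation~\eqref{eq:wcc-def}: each block $f(\vex)_{v,S}$ is a coordinatewise rational function of $\vex_{\delta(v,S),S}$, with the linear denominator $\|\ved_{v,S}+w_{v,S}\cdot\vex_{\delta(v,S),S}\|_1=\|\ved_{v,S}\|_1+w_{v,S}\cdot\|\vex_{\delta(v,S),S}\|_1$. Hence, once the instance and the evaluation point $\vex$ are fixed, computing $f(\vex)$, then $f(f(\vex))$, and finally the two $\ell_1$-norms is purely arithmetic and deterministic.

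The construction then proceeds in four steps. First, specify the ten voters, five candidates, each voter's partition $\calS_v$ of $C$ into bundles, the delegate $\delta(v,S)\in V$ for each bundle, the budget $b_{v,S}$, and the default vector $\ved_{v,S}$, with all weights $w_{v,S}=10$. Second, specify a point $\vex\in\R^{nm}$ lying in the feasible polytope, i.e.\ $\|\vex_v\|_1=1$ and $\|\vex_{v,S}\|_1=b_{v,S}$ for every $v,S$. Third, apply~\eqref{eq:wcc-def} entry-by-entry to obtain $f(\vex)$ and then $f(f(\vex))$. Fourth, evaluate the two $\ell_1$-norms and check the strict inequality $\|f(\vex)-f(f(\vex))\|_1>\|\vex-f(\vex)\|_1$.

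The main obstacle is \emph{locating} such a witness, not verifying it. The intuition is that the normalization in~\eqref{eq:wcc-def} amplifies perturbations of $\vex$ when a delegate's support $\|\vex_{\delta(v,S),S}\|_1$ for a bundle is small relative to $\|\ved_{v,S}\|_1/w_{v,S}$, because then the denominator changes rapidly relative to the numerator as $\vex_{\delta(v,S),S}$ varies; chaining several such sensitive bundles through the delegation graph compounds the effect. In practice, one searches among small instances by random or guided sampling: draw a random instance with the claimed parameters, draw a random feasible $\vex$, and compute the ratio $\|f(\vex)-f(f(\vex))\|_1/\|\vex-f(\vex)\|_1$, accepting when it exceeds~$1$. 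Ten voters and five candidates provide enough combinatorial richness for such a ratio to appear readily, while at the same time being small enough that the resulting witness can be recorded explicitly and the verification carried out by straightforward rational arithmetic directly inside the proof.
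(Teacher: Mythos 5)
Your approach is the same as the paper's: the authors state explicitly that these lemmata are computationally discovered counterexamples, and the actual construction is deferred to a Jupyter notebook in the linked repository; the verification there is exactly the direct evaluation of the closed-form (WCC) best response \eqref{eq:wcc-def} at a concrete feasible point that you describe. The one thing to flag is that your write-up stops at the search strategy and never exhibits the witness itself. For a purely existential statement the concrete instance (the partitions, delegates, budgets, defaults, and the point $\vex$) \emph{is} the proof; the heuristic that the normalization in \eqref{eq:wcc-def} amplifies perturbations when $\|\vex_{\delta(v,S),S}\|_1$ is small is good motivation for where to look, but it does not by itself establish that the ratio ever exceeds $1$ on the feasible polytope. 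So to turn the proposal into a complete proof you must actually record the numbers and carry out the two evaluations of $f$, just as the paper does (externally, in its notebook).
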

In fact, an even weaker notion would do.
It is known that if the so-called pseudo-gradient of the regret function defined as $(f(\vex) - \vex)^2$ (with $(\bullet)^2$ defined coordinate-wise) is \emph{pseudo-monotone}, then certain techniques converge quickly.
(For more details see the book~\cite[Section 2.1, Theorem 2.6]{ModernOptBook}.)
However, we computationally disprove that this is the case.
Specifically for (WCC), to disprove pseudo-monotonicity, one has to find a solution $\vey$ and a fixed-point $\vex$ such that the dot-product $(\vey - f(\vey)) \cdot (\vey - \vex)$ is negative.
Thus, we set up a mathematical program with polynomial constraints that encodes that \textbf{a)} $\vey$ is a fixed-point, \textbf{b)} $\vex$ is a point respecting constraints~\eqref{eq:sum-to-one} and~\eqref{eq:delegation}, and \textbf{c)} minimizing $(\vey - f(\vey)) \cdot (\vey - \vex)$.
We terminate when we find such $\vex, \vey$ for which the optimum is negative, because they constitute a counterexample.
\begin{lemma} \label{lem:pseudomono}
  There is an instance with $4$ voters, $5$ candidates, all weights equal to $10$, and with the defaults being the even-split, for which the pseudo-gradient of $(f(\vex) - \vex)^2$ is not pseudo-monotone.
\end{lemma}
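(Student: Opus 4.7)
The plan is to disprove pseudo-monotonicity constructively by producing, via computer search, a single small (WCC) instance with $n=4$ voters, $m=5$ candidates, all weights $w_{v,S}=10$, and even-split defaults, together with a fixed-point $\vex^{\ast} = f(\vex^{\ast})$ and a feasible vector $\vey$ satisfying (\ref{eq:sum-to-one})--(\ref{eq:delegation}), such that $(\vey - f(\vey)) \cdot (\vey - \vex^{\ast}) < 0$. Taking the pseudo-gradient to be $F(\vex) = \vex - f(\vex)$, the fact that $F(\vex^{\ast}) = \vezero$ makes the premise $\langle F(\vex^{\ast}), \vey - \vex^{\ast}\rangle \geq 0$ of the pseudo-monotonicity implication automatic, so such a pair is a bona fide violation.

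First I would fix the combinatorial shape of the instance: for each voter the bundle partition $\calS_v$ of $C$, the budgets $b_{v,S}$, and the delegate $\delta(v,S)$. Since a contractive $f$ would automatically be pseudo-monotone, I would bias the search toward delegation graphs with cycles across bundles, similar in spirit to the expansive instance already constructed in Lemma~\ref{lem:contraction}. The whole search is then encoded as a single nonconvex polynomial optimisation problem over decision variables $\vex^{\ast},\vey \in \R^{V \times C}$ together with an auxiliary $\vez \in \R^{V \times C}$ representing $f(\vey)$ componentwise. Its constraints are: (i) both $\vex^{\ast}$ and $\vey$ obey (\ref{eq:sum-to-one})--(\ref{eq:delegation}); (ii) $\vex^{\ast}$ obeys the (WCC) fixed-point system (\ref{eq:proportionality_wcc}); and (iii) $\vez$ is tied to $\vey$ via exactly the equations (\ref{eq:proportionality_wcc}), with $\vey$ in the role of the delegate's ballot and $\vez$ in the role of the left-hand voter's ballot. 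Introducing $\vez$ clears the quotient from the definition of $f$, so the objective $g(\vex^{\ast},\vey,\vez) := (\vey - \vez) \cdot (\vey - \vex^{\ast})$ becomes a polynomial of degree at most four; any feasible triple with strictly negative $g$ is a certificate.

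I would then hand this program to a nonconvex solver (Baron, Knitro, IPopt with multistart, or in principle the procedure underlying Proposition~\ref{prop:renegar}) with many random initialisations and across a handful of candidate delegation patterns, halting the moment a strictly negative objective is attained and verifying the triple by direct substitution into (\ref{eq:proportionality_wcc}) and the sum/budget constraints. The main obstacle is the \emph{double non-convexity} of the formulation: both the fixed-point constraints on $\vex^{\ast}$ and the objective are genuinely nonconvex polynomials, so local solvers can easily converge to fixed-points at which pseudo-monotonicity happens to hold, wasting search effort. I would mitigate this with aggressive random restarts and by varying the delegation graph; exact arithmetic on the certificate is unnecessary, since $g$ depends continuously on its arguments, so a sufficiently precise floating-point witness with strictly negative value extends to a genuine counterexample.
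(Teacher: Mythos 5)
Your proposal is correct and matches the paper's own approach: the paper likewise sets up a polynomial program enforcing that one point is a (WCC) fixed-point and the other is feasible for \eqref{eq:sum-to-one}--\eqref{eq:delegation}, minimizes the dot-product $(\vey - f(\vey)) \cdot (\vey - \vex)$, and stops upon finding a strictly negative optimum, certifying the counterexample computationally (the code is in the cited notebooks). Your added details---clearing the denominator of $f$ via an auxiliary variable playing the role of $f(\vey)$, and exploiting that the premise of pseudo-monotonicity is automatic at a fixed-point---are consistent with, and slightly more explicit than, the paper's description.
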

It is known that if the pseudo-gradient is pseudo-monotone, then there is a unique solution.
It is thus natural to ask whether there are instances where, under (WCC), there exist non-unique solutions.
Again, we computationally discover such instances:
\begin{lemma} \label{lem:nonunique}
  There is an instance with $10$ voters, $5$ candidates, all weights equal to $10$, and two fixed-points $\vex_1, \vex_2$ such that $\|\vex_1 - \vex_2\|_1 > 15$.
\end{lemma}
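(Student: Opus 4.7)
The plan is to computationally search for a counterexample using mathematical programming, in the same spirit as the proof of Lemma~\ref{lem:pseudomono}. We fix $n = 10$, $m = 5$, all weights $w_{v,S} = 10$, and (say) the even-split defaults. The delegation structure---the partitions $\calS_v$, the budgets $b_{v,S}$, and the delegates $\delta(v,S)$---is left as a design parameter. Two candidate fixed-points $\vex^{(1)}, \vex^{(2)} \in \R^{nm}$ are introduced as continuous variables, along with auxiliary variables $z_{v,c} \geq 0$ used to linearize the $\ell_1$-distance via $z_{v,c} \geq x^{(1)}_{v,c} - x^{(2)}_{v,c}$ and $z_{v,c} \geq x^{(2)}_{v,c} - x^{(1)}_{v,c}$.

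For each $i \in \{1,2\}$ we impose the cumulative-ballot constraints~\eqref{eq:sum-to-one}, the budget constraints~\eqref{eq:delegation}, and the (WCC) proportionality constraints~\eqref{eq:proportionality_wcc}, which together are exactly the condition that $\vex^{(i)}$ is a fixed-point of $f$. The objective is to maximize $\sum_{v,c} z_{v,c}$, i.e., $\|\vex^{(1)} - \vex^{(2)}\|_1$, and we terminate as soon as the objective exceeds $15$. Because all constraints are polynomial of degree at most two in the continuous variables, this is a QCQP (once the delegation structure is fixed) and in principle amenable to existing QCQP solvers such as IPopt, Knitro, or BARON.

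The main obstacle is the combinatorial part: the partition $\calS_v$ and the delegate assignment $\delta(\cdot,\cdot)$ are discrete, so a fully monolithic search over all delegation structures on $10$ voters and $5$ candidates is impractically large. We overcome this by not solving the combinatorial problem globally but by enumerating or heuristically guessing plausible delegation structures---guided by small examples that exhibit near-non-uniqueness, or by symmetric constructions in which voters are grouped into two ``camps'' that mutually reinforce different fixed-points---and, for each guess, solving the continuous polynomial subproblem to search for two fixed-points that are far apart in $\ell_1$-distance. Once a structure and a pair $(\vex^{(1)}, \vex^{(2)})$ with $\|\vex^{(1)} - \vex^{(2)}\|_1 > 15$ are located, the lemma is proved by exhibiting the explicit delegation data together with the two numerical solutions; verification reduces to substituting each $\vex^{(i)}$ into \eqref{eq:sum-to-one}, \eqref{eq:delegation}, and \eqref{eq:proportionality_wcc}, checking each equation, and computing the $\ell_1$-distance.
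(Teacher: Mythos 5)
Your approach matches the paper's: the authors also establish this lemma purely computationally, by setting up a polynomial/QCQP feasibility system whose constraints (\eqref{eq:sum-to-one}, \eqref{eq:delegation}, \eqref{eq:proportionality_wcc}) encode that each of two points is a (WCC) fixed-point, searching with a solver over hand-chosen delegation structures, and then exhibiting the explicit instance and the two numerical fixed-points in their accompanying notebooks; verification is by direct substitution, exactly as you describe. One small technical slip: the standard linearization $z_{v,c} \geq \pm\bigl(x^{(1)}_{v,c} - x^{(2)}_{v,c}\bigr)$ only works when \emph{minimizing} $\sum_{v,c} z_{v,c}$; under maximization the $z$-variables are unbounded above, so you would instead need to impose $z_{v,c} = \lvert x^{(1)}_{v,c} - x^{(2)}_{v,c}\rvert$ exactly (e.g., via sign variables or by handing the nonconvex objective directly to the global solver) --- a fixable detail that does not affect the overall strategy.
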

For the constructions and specific counterexamples, see the \texttt{*-counterexample.ipynb} Jupyter notebooks at \url{https://github.com/martinkoutecky/fgld-cb}.

On the positive side, since the constraints on solutions satisfying (EP), (EP-TI), and (WCC) can be formulated as logical connections of polynomial inequalities, Proposition~\ref{prop:renegar} gives:
\begin{theorem} \label{thm:fixeddim}
  For fixed $n$ and $m$, an $\epsilon$-strong approximation of a solution $\vex \in \R^{nm}$ of an instance of FGLD for CB with $n$ voters and $m$ candidates satisfying any of (EP), (EP-TI), or (WCC) can be found in time polynomial in $\log(\vew, \ved, 1/\epsilon)$.
\end{theorem}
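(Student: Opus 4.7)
The plan is, for each of the three notions of proportionality, to cast the search for a solution $\vex \in \R^{nm}$ as deciding satisfiability of a quantifier-free formula $\varphi(\vex)$ in the existential theory of the reals, and then invoke Renegar's algorithm (Proposition~\ref{prop:renegar}). For all three notions, $\varphi$ will share a common admissibility block consisting of the box constraints $x_{v,c} \in [0,1]$, the cumulative ballot equalities~\eqref{eq:sum-to-one}, and the per-bundle budget equalities~\eqref{eq:delegation}; these are linear and polynomial in number. What differs between the three notions is only the encoding of the proportionality fixed-point condition, which I will add as further polynomial (in)equalities of degree at most~$2$.

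For (EP), I would simply adjoin the quadratic equality~\eqref{eq:proportionality} for every triple $(v, S, \{c_1, c_2\})$ with $c_1, c_2 \in S$, relying on Observation~\ref{obs:ep-constraint} to handle the zero-support corner case automatically. For (WCC), I would analogously adjoin~\eqref{eq:proportionality_wcc} for every triple $(v, S, c)$. In both cases $\varphi$ becomes a conjunction of polynomially many degree-$\leq 2$ (in)equalities over $nm$ variables whose coefficients come from $\vew$ and $\ved$, so the total encoding length $L$ of $\varphi$ is polynomial in the encoding length of these inputs.

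The main obstacle will be the (EP-TI) encoding, since its best-response function is defined piecewise according to the sign of $\|\vex_{\delta(v,S),S}\|_1 - \epsilon_{v,S}$. I would represent each such piecewise definition as a disjunction of two conjunctions: a high-support branch pairing the inequality $\|\vex_{\delta(v,S),S}\|_1 \geq \epsilon_{v,S}$ with the cleared-denominator equality $\vex_{v,S} \cdot \|\vex_{\delta(v,S),S}\|_1 = \vex_{\delta(v,S),S} \cdot b_{v,S}$, and a low-support branch pairing $\|\vex_{\delta(v,S),S}\|_1 < \epsilon_{v,S}$ with the cleared-denominator form of~\eqref{eq:interpolation-def}. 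Since $\|\ved_{v,S}\|_1 = b_{v,S}$, the denominator appearing in~\eqref{eq:interpolation-def} unfolds into the linear expression $\|\vex_{\delta(v,S),S}\|_1 + (\epsilon_{v,S} - \|\vex_{\delta(v,S),S}\|_1)\cdot b_{v,S}$, so clearing it still yields degree-$2$ equalities. The explicit comparison inside each branch forces any satisfying assignment to activate the branch matching the actual sign of $\|\vex_{\delta(v,S),S}\|_1 - \epsilon_{v,S}$, and the Brouwer argument already made guarantees existence of such an assignment.

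With $\varphi$ assembled, I would apply Proposition~\ref{prop:renegar} with $nm$ variables, $O(nm^2)$ polynomial inequalities, and maximum degree $d \leq 2$. The running time of Renegar's algorithm then contains a $(\cdot)^{O(nm)}$ factor that becomes a constant for fixed $n$ and $m$, leaving a polynomial dependence on $L$ and $\log(1/\epsilon)$, where $L$ is polynomial in the encoding length of $\vew$ and $\ved$, matching the stated bound.
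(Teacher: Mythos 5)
Your proposal is correct and follows essentially the same route as the paper's own proof: encode the admissibility constraints \eqref{eq:sum-to-one}--\eqref{eq:delegation} together with the notion-specific proportionality conditions (\eqref{eq:proportionality} for (EP), a guarded two-branch formula for (EP-TI), and \eqref{eq:proportionality_wcc} for (WCC)) as a quantifier-free formula of degree at most $2$, and invoke Proposition~\ref{prop:renegar}. Your extra observations---that the (EP-TI) denominator collapses to a linear expression because $\|\ved_{v,S}\|_1 = b_{v,S}$, and that the constraint count is $O(nm^2)$ rather than $O(nm)$---only add precision to the same argument.
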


\begin{proof}
Our goal is to construct a formula $\varphi(\vex)$ describing a solution $\vex$, and then apply Renegar's algorithm (Proposition~\ref{prop:renegar}).
For (EP), consider the QCQP given by constraints~\eqref{eq:sum-to-one}--\eqref{eq:proportionality}.
A formula $\varphi$ expressing that $\vex$ satisfies all of these constraints is simply their conjunction, the number of constraints is bounded by $O(nm)$, the largest degree is $2$, and the largest coefficient is $1$.
For (EP-TI), we can use an implication: if $\|\vex_{\delta(v,S),S}\|_1 \geq \epsilon_{v,S}$, then constraint~\eqref{eq:proportionality} must hold, otherwise a different quadratic constraint given by $f$ must hold.
The number of constraints is again $O(nm)$, but now the encoding length of coefficients depends on the largest weight $w_{v,S}$ and default vector $\ved$.
This is no issue because the encoding length anyway only enters the complexity of Proposition~\ref{prop:renegar} polynomially.
For (WCC), simply consider the QCQP given by~\eqref{eq:sum-to-one}, \eqref{eq:delegation}, and \eqref{eq:proportionality_wcc}.
The estimates are the same as for (EP-TI).
\end{proof}

\section{Generalizations}\label{section:generalizations}

We will now show a major strength of our treatment: it can be widely generalized.
We start with stating under which conditions a solution is guaranteed to exists; the following is essentially a restatement of Brouwer's theorem:

\begin{theorem}\label{thm:meta_exists}
  Let $n$ be the number of voters and $m$ the number of candidates.
  For each $i \in [n]$, let $K_i \subseteq \R^m$ be a convex and closed set of possible votes of voter $v_i$, and let $K = K_1 \times K_2 \times \cdots \times K_n$.
  For each $i \in [n]$, let $f_i: K \to K_i$ be the best-response function of a voter $v_i$, that is, with respect to any $\vex \in K$, if voter $v_i$ chooses action $f_i(\vex)$, then their individual regret is $0$.

  If each $f_i$ is continuous, then there exists a fixed point $\vex \in K$, that is, there exists, for each voter $v_i \in V$, an action $\vex_{v_i}$, such that their regret is $0$.
\end{theorem}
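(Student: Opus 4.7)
The plan is to show this is a direct corollary of Brouwer's fixed-point theorem applied to the product best-response map. First I would define the composite function $f \colon K \to K$ by $f(\vex) = \bigl(f_1(\vex), f_2(\vex), \dots, f_n(\vex)\bigr)$. This is well-defined because each $f_i(\vex) \in K_i$ and hence $f(\vex) \in K_1 \times \cdots \times K_n = K$.

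Next I would verify the hypotheses of Brouwer's theorem for $f$ on $K$. Convexity of $K$ follows from the fact that a Cartesian product of convex sets is convex, and likewise $K$ is closed as a product of closed sets. In the setting of cumulative ballots, each $K_i$ lives inside the probability simplex $\{\vey \in \R^m_{\geq 0} : \|\vey\|_1 = 1\}$, which is bounded, so each $K_i$ is compact and therefore so is the product $K$; I would note this explicitly as the statement only says ``convex and closed'' but Brouwer requires compactness, which is automatic here. Continuity of $f$ follows coordinate-wise from the continuity of each $f_i$, since continuity into a product space is equivalent to continuity into each factor.

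Brouwer's theorem then yields a point $\vex^* \in K$ with $f(\vex^*) = \vex^*$, i.e. $f_i(\vex^*) = \vex^*_{v_i}$ for every $i \in [n]$. By the defining property of the best-response function, this equality says exactly that voter $v_i$'s action $\vex^*_{v_i}$ already equals her best response to $\vex^*$, so her individual regret at $\vex^*$ is $0$.

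There is no genuine obstacle in this argument; the only subtlety worth flagging in writing is the compactness issue above, since without some bound on $K$ Brouwer's theorem does not apply (one can just observe that the paper's setting always provides such a bound through the ballot-weight constraint). Everything else is bookkeeping: products preserve convexity, closedness, and compactness, and continuity of a product map is equivalent to continuity of its coordinates.
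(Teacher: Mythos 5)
Your proposal is correct and follows essentially the same route as the paper: form the product map $f = (f_1,\dots,f_n)$ on $K$ and invoke Brouwer's theorem. Your explicit remark that Brouwer needs compactness (not just ``convex and closed'') and that boundedness is supplied by the ballot constraints is a point the paper's own one-line proof glosses over, so it is a welcome addition rather than a deviation.
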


\begin{proof}
The function $f(\vex) = (f_1(\vex), f_2(\vex), \dots, f_n(\vex))$ is continuous and the set $K$ is convex and closed.
The existence of a fixed-point follows by Brouwer's theorem.
\end{proof}

Intuitively, the theorem above states that, if each voter $v$ has a continuous best-response function~$f$ and their regret is $\|f_v(\vex) - \vex_v\|_1$, or equivalently, if their regret $r_v: K \to \R_{\geq 0}$ is continuous and they can unilaterally decrease it to $0$, then a solution is always guaranteed to exist. (One implication of the equivalence is easy; the other direction follows by, given a regret function $r_v$, defining, for each $\vex \in K$, $f_v(\vex)$ to be some action $\vex_v$ which decreases the regret of $v$ to $0$ with respect to $\vex$.)
Let us outline a few settings which can be captured by Theorem~\ref{thm:meta_exists}:
\begin{enumerate}
\item \textbf{Proportionality per bundle.} Each voter $v$ can specify for each bundle $S$ whether they require (EP-TI) or (WCC) for this delegation.
\item \textbf{(WCC) for subcommittees.} A voter $v$ may wish to delegate their decision to a \emph{committee} of delegates: say that $v$ designates $k$ delegates $v_1, \dots, v_k$, each with a weight $w_1, \dots, w_k$, and the best response of $v$ is to take $\ved_{v,S} + \sum_{i=1}^k w_i \vex_{v_i, S}$ and scale it to be of $\ell_1$-norm $b_{v,S}$.
\item \textbf{Large- vs small-scale decisions.} We have focused on the setting where the voter makes a ``large-scale'' decision of how support should be split among bundles of candidates, and delegates the ``small-scale'' decision within each bundle. Theorem~\ref{thm:meta_exists} captures also the setting where the voter specifies support ratios within bundles (e.g., by specifying a non-negative $|S|$-dimensional vector $\ved_S$ with $\|\ved_S\|_1 = 1$ for each bundle $S$), but delegates the decision of how to split the total support among these bundles to a delegate $\delta(v)$.
\item \textbf{Continuous confidence functions.} In (WCC), a voter expresses their confidence in a delegate through the weight $w_{v,S}$. The influence of $\delta(v,S)$ increases with $w_{v,S}$ and decreases with $\|\vex_{\delta(v,S),S}\|_1$. A voter may specify a less straight-forward interaction. Imagine that there is a candidate $c$ which $v$ is strongly in favor of, and will trust a delegate $\delta(v,S)$ to the degree to which $\delta(v,S)$ is also in favor of $c$. As long as the dependence of the confidence of $v$ in $\delta(v,S)$ is continuous, satisfying solutions are guaranteed by Theorem~\ref{thm:meta_exists}.
\item \textbf{Spatial voting.} In spatial voting~\cite{enelow1984spatial}, a voter's ballot is some real $d$-dimensional vector. We may define FGLD in this setting analogously to the previous point---the action of $v$ will be a combination of their default and the solution of their delegate(s) to the degree of the (continuous) confidence of $v$ in $\delta(v,S)$.
\end{enumerate}

Turning to tractability, we have the following meta-theorems:
\begin{theorem} \label{thm:meta_compute}
  Let $n,m$, and, for each $i \in [n]$, $K_i$ and $f_i$, be defined as in Theorem~\ref{thm:meta_exists}.
  Then:
  \begin{enumerate}
    \item If each $f_i$ is $M$-Lipschitz continuous, then an $\epsilon$-weak approximate fixed-point $\vex$ can be found in $((1/\epsilon) M)^{O(n-1)}$ queries of $f$.
    \item If each $f_i$ is continuous and can be expressed by a quantifier free formula $\varphi_i(\vex)$ with at most $p$ polynomials of maximum degree $d$ and maximum coefficient encoding length $L$, then an $\epsilon$-strong approximation of a solution can be found in time polynomial in $L, p, d$ and $1/\epsilon$, if $n$ and $m$ are fixed.
  \end{enumerate}
\end{theorem}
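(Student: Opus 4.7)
The plan is to treat both parts as direct applications of the two previously stated approximation propositions (Proposition~\ref{prop:ppad} and Proposition~\ref{prop:renegar}) to the combined best-response map $f(\vex) = (f_1(\vex), \dots, f_n(\vex))$, which is a continuous self-map of the compact convex set $K \subseteq \R^{nm}$ by the same reasoning already used in Theorem~\ref{thm:meta_exists}. Fixed-points of $f$ are exactly the joint fixed-points of the $f_i$, so it suffices to approximate one fixed-point of $f$ in each case.

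For part~(1), I would first check that $f$ inherits $M$-Lipschitz continuity from its coordinate blocks. Under the product $\ell_\infty$ norm this is essentially immediate, since
\[
\|f(\vex) - f(\vey)\|_\infty \;=\; \max_{i \in [n]} \|f_i(\vex) - f_i(\vey)\| \;\leq\; M \, \|\vex - \vey\| \enspace .
\]
Once this is established, I would invoke Proposition~\ref{prop:ppad} on $f$ in ambient dimension $nm$; this gives an $\epsilon$-weak approximate fixed-point in $((1/\epsilon) M)^{nm-1}$ queries, which matches the claimed $((1/\epsilon) M)^{O(n-1)}$ once $m$ is absorbed into the constant inside the $O(\cdot)$.

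For part~(2), the plan is to form the single quantifier-free formula $\varphi(\vex) = \bigwedge_{i=1}^{n} \varphi_i(\vex)$, whose satisfying assignments are precisely the joint fixed-points of $f$, and then invoke Proposition~\ref{prop:renegar}. The formula uses at most $np$ polynomial inequalities of maximum degree $d$ with coefficient encoding length at most $L$, and the ambient dimension is $nm$. Renegar's algorithm therefore runs in time
\[
\max\{L, \log(1/\epsilon)\} \cdot \polylog(L) \cdot (npd)^{O(nm)} \enspace ,
\]
and for fixed $n$ and $m$ the factor $(npd)^{O(nm)}$ collapses to $\poly(p,d)$, yielding the claimed polynomial dependence on $L$, $p$, $d$, and $1/\epsilon$ (in fact even on $\log(1/\epsilon)$).

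The argument is essentially bookkeeping over the two previously stated black-box propositions, so I do not expect a substantive obstacle. The only points requiring care are (a) distinguishing the ambient dimension $nm$ of $\vex$ from the parameters $n$ and $m$ appearing in the claimed running-time bounds, and (b) choosing norms so that the Lipschitz constant of the product map $f$ is bounded uniformly by $M$ rather than by $M$ times a factor that grows with $n$.
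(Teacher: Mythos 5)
Your proposal is correct and takes exactly the same route as the paper, whose entire proof is that the theorem is a straightforward application of Propositions~\ref{prop:ppad} and~\ref{prop:renegar}; you simply make explicit the bookkeeping (the Lipschitz bound for the product map, the conjunction $\bigwedge_i \varphi_i$, and the fact that the ambient dimension is $nm$ rather than $n$) that the paper leaves implicit.
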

\begin{proof}
The theorem is a straightforward application of Propositions~\ref{prop:ppad} and~\ref{prop:renegar}, respectively.
\end{proof}

\section{Outlook}

We considered the setting of FGLD for CBs and concentrated on the basic issue of how to resolve voter delegations transitively and in a way that is proportional, for suitable definitions of proportionality.

In the context of FGLD for CBs, our results are an important step towards allowing voter expressiveness and flexibility, advancing the state of the art in fine-grained liquid democracy.

In a more general context, we view our theoretical treatment---culminating in our meta-theo\-rems---as an important result that could be used for other settings (such as those briefly discussed in Section~\ref{section:generalizations}) as well.
In particular, our meta-theorems can be used in social choice settings that are continuous in nature; a particularly promising area is that of spatial voting~\cite{enelow1984spatial}.

Besides using our meta-theorems for such continuous social choice settings, an interesting avenue for future research is to develop analogous meta-theorems for discrete settings. This may be possible using fixed-point theorems for discrete functions, and the logic would be, similarly to the continuous setting, to view a given social choice setting as a game, define appropriate regret functions and apply discrete fixed-point theorems.
Even if the conditions of discrete fixed-point theorems could not be satisfied, one can consider the analogue of a mixed Nash equilibrium, where a solution would not be a single action but rather a distribution on player's actions.
Such meta-theorems may be used also to revisit the setting of ordinal FGLD~\cite{brill2018pairwise} and Knapsack FGLD~\cite{jain2021preserving}.

\section*{Acknowledgements}
We would like to thank the anonymous reviewers for their helpful comments.

Martin Kouteck{\'y} was partially supported by Charles University project UNCE/SCI/004 and by the project 22-22997S of GA ČR.
Krzysztof Sornat was partially supported by
the SNSF Grant 200021\_200731/1
and the European Research Council (ERC) under the European Union’s Horizon 2020 research and innovation programme (grant agreement No 101002854).
Nimrod Talmon was supported by
the Israel Science Foundation (ISF; Grant No. 630/19).

\bibliographystyle{plain}
\bibliography{bib}

\end{document}